\documentclass[journal,twoside,web]{ieeecolor}
\usepackage{generic}
\usepackage{cite}
\usepackage{amsmath,amssymb,amsfonts}
\usepackage{algorithmic}
\usepackage{graphicx}
\usepackage{textcomp}

\usepackage{epsfig}
\usepackage{cleveref}
\usepackage[caption=false]{subfig}
\usepackage{xcolor}
\DeclareMathOperator*{\argmax}{arg\,max}
\DeclareMathOperator*{\argmin}{arg\,min}
\newtheorem{theorem}{Theorem}
\newtheorem{lemma}{Lemma}
\newtheorem{corollary}{Corollary}
\newtheorem{remark}{Remark}
\newtheorem{definition}{Definition}

\usepackage{mathtools}

\begin{document}
\title{Ensuring Resilience Against Stealthy Attacks on Cyber-Physical Systems}
\author{Paul Griffioen, \IEEEmembership{Student Member, IEEE}, Bruce H. Krogh, \IEEEmembership{Fellow, IEEE}, and Bruno Sinopoli, \IEEEmembership{Fellow, IEEE}
\thanks{Copyright 2022 IEEE.
This material is based upon work funded and supported by the Department of Defense under Contract No. FA8702-15-D-0002 with Carnegie Mellon University for the operation of the Software Engineering Institute, a federally funded research and development center.
DM22-0372
}
\thanks{P. Griffioen and B. H. Krogh are with the Department of Electrical and Computer Engineering, Carnegie Mellon University, Pittsburgh, PA, USA 15213. B. Sinopoli is with the Department of Electrical and Systems Engineering, Washington University in St. Louis, St. Louis, MO, USA 63130. Email: {\tt\small\{pgriffi1|krogh\}@andrew.cmu.edu, bsinopoli@wustl.edu}}}

\maketitle

\begin{abstract}
This article provides a tool for analyzing mechanisms that aim to achieve resilience against stealthy, or undetectable, attacks on cyber-physical systems (CPSs). We consider attackers who are able to corrupt all of the inputs and outputs of the system. To counter such attackers, a response scheme must be implemented that keeps the attacker from corrupting the inputs and outputs of the system for certain periods of time. To aid in the design of such a response scheme, our tool provides sufficient lengths for these periods of time in order to ensure safety with a particular probability. We provide a conservative upper bound on how long the system can remain under stealthy attack before the safety constraints are violated. Furthermore, we show how a detector limits the set of biases an attacker can exert on the system while still remaining stealthy, aiding a system operator in the design of the detector. Our contributions are demonstrated with an illustrative example.
\end{abstract}

\begin{IEEEkeywords}
Fault tolerant systems, robust control, fault detection, cyber-physical systems.
\end{IEEEkeywords}

\section{Introduction}
Securing cyber-physical systems, engineered systems which include sensing, processing, control, and communication in physical spaces, is essential in today’s society. CPSs are ubiquitous in modern critical infrastructures including manufacturing, transportation systems, energy delivery, health care, water management, and the smart grid. The presence of heterogeneous components and devices creates numerous attack surfaces in these large scale, highly connected systems. Consequently, these systems are attractive targets for adversaries and are essential to protect.

Due to the strong coupling between cyber and physical domains, the tools and methodologies developed to ensure security in the cyber domain alone are insufficient to secure CPSs. Techniques within cyber security such as authenticated encryption, message authentication codes, and digital signatures are often computationally expensive to implement and fail to recognize purely physical attacks. For example, the integrity of sensor measurements can be modified by changing a sensor’s local environment while control inputs can be changed by directly manipulating system actuators. In such a scenario, message authentication codes or digital signatures fail to recognize an attack.

CPS vulnerabilities have culminated in several effective attacks from highly resourceful and knowledgeable adversaries. In the Maroochy Shire incident \cite{slay2007lessons}, a malicious insider was able to utilize detailed system knowledge to attack a waste management system in Queensland, Australia, resulting in the leakage of millions of liters of sewage. With the Stuxnet attack \cite{langner2011stuxnet}, a nation state adversary was able to compromise a uranium enrichment facility in Iran, leading to the destruction of a thousand centrifuges. In 2015, hackers were able to remotely compromise a supervisory control and data acquisition (SCADA) system in Ukraine \cite{case2016analysis}, allowing them to cause widespread blackouts.

Motivated by the threat of such sophisticated attackers, we aim to design resilient CPSs by providing a set of mechanisms and tools that can be used to preserve safety while functionality is restored in the presence of attacks. In this work, we present a tool that provides a conservative upper bound on the amount of time a system can be under stealthy attack while still remaining probabilistically safe. This tool is general enough that it can be used in the design and analysis of any resilience mechanisms.

The rest of this article is organized as follows. Section II surveys a variety of mechanisms used to achieve resilience, specifically those used for detection and response. Section III introduces the system model used in analyzing these response mechanisms, along with the control, estimation, and detection schemes. Section IV describes stealthy adversaries, analyzing the set of biases they are able to exert on the system without being detected. Section V provides sufficient conditions for ensuring the probabilistic safety of the overall system, describing how the length of time under which the components can be trusted directly affects whether or not these conditions are satisfied. Section VI includes results from an example illustration, and Section VII concludes the article.

\section{Previous Work}
Two necessary components in designing resilient CPSs include detection and response. The recognition and detection of attacks is the first and foremost step in achieving resilience. Once an attack is detected, a number of forms of active response can be implemented to ensure system resilience.

\subsection{Detection}
There are two main forms of detection: passive and active. Passive detection techniques process the defender’s information to make a binary decision about the system, outputting either the null hypothesis (normal system operation) or the alternative hypothesis (the system is under attack). While passive detection techniques are effective against benign faults, bad data detectors can oftentimes be bypassed by powerful adversaries who leverage access to system channels and/or model knowledge to construct attacks so that the outputs received by a system operator are statistically consistent with expected output behavior. Since passive detection techniques provably fail in these instances, active detection techniques must be used to detect malicious attacks, where a defender intelligently changes the policy online by adding perturbations to the system \cite{griffioen2019tutorial,weerakkody2017active}.

Figure \ref{fig:ActiveDetection} categorizes two types of active detection according to which part of the control infrastructure is being attacked.
\begin{figure}[h!]
\centering
\subfloat[Physical Watermarking]{\includegraphics[width=0.5\columnwidth]{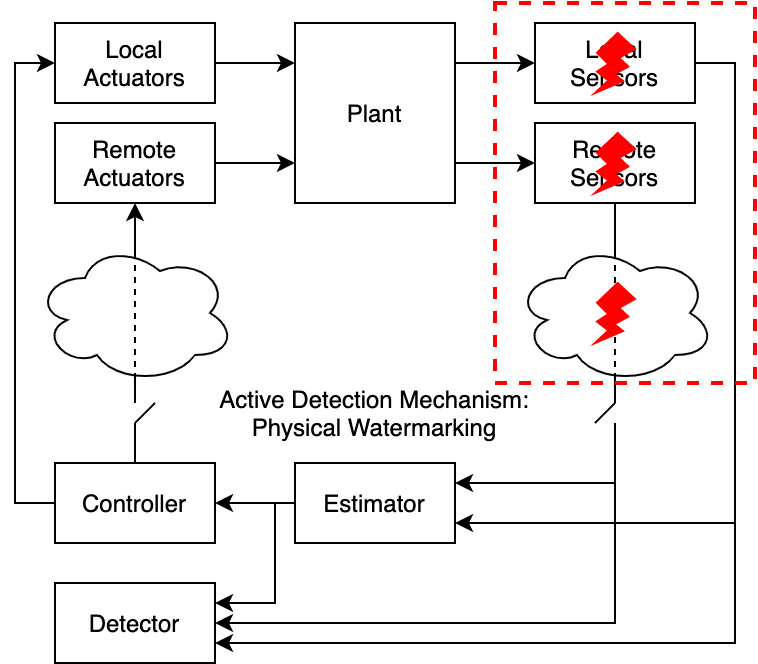}\label{fig:Watermarking}}
\subfloat[Moving Target Defense]{\includegraphics[width=0.5\columnwidth]{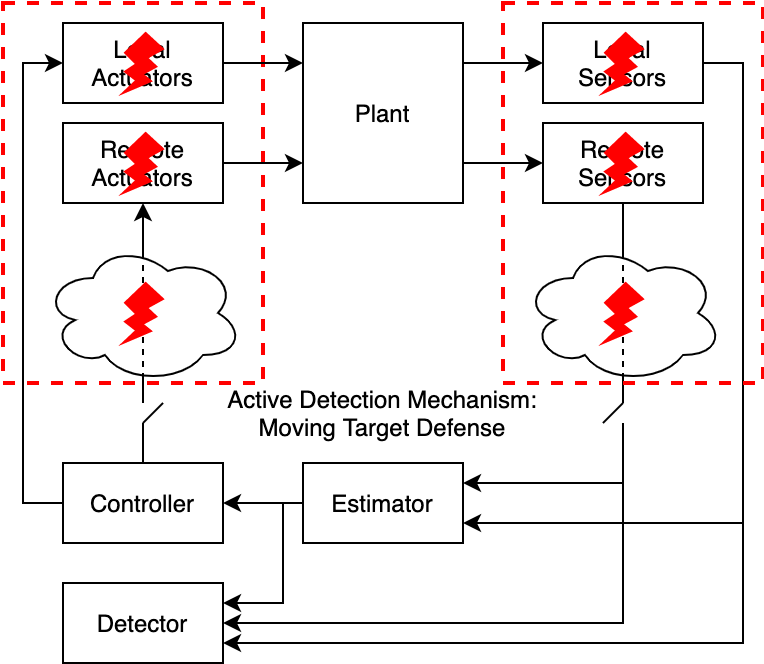}\label{fig:MovingTarget}}
\caption{Active detection mechanism for attacks on (a) the sensor measurements and (b) the control inputs and sensor measurements.}
\label{fig:ActiveDetection}
\end{figure}
When the physical sensors and/or the sensor communication channels are being attacked (Figure \ref{fig:Watermarking}), physical watermarking can be used for detection. When the physical actuators and sensors and/or the actuator and sensor communication channels are being attacked (Figure \ref{fig:MovingTarget}), the moving target defense can be used for detection. In this way, the moving target defense is effective in detecting a broader range of attacks.

Motivated by the use of nonces in cyber security, a physical watermark is a secret noisy (random) control input inserted in addition to or in place of an intended control input to authenticate the system \cite{mo2009secure,mo2013detecting,mo2015physical,satchidanandan2016dynamic,weerakkody2014detecting,ferrari2017detection,fang2020optimal}. In particular, the control input serves as a secret and the watermark acts as a cyber-physical nonce. Under normal conditions, the watermark will be embedded in the sensor outputs due to the system dynamics. However, under replay attack, the measurements will contain physical responses to an earlier sequence of watermarks. By designing a detector that recognizes the presence of the watermark in the sensor outputs, a defender is able to verify the freshness of the received sensor measurements.

Motivated by the use of message authentication codes (MACs) in cyber security, the moving target defense keeps an adversary unaware of the full system model by changing the parameters of the system’s physical dynamics (hybrid moving target defense) or by changing the parameters of an authenticating subsystem’s physical dynamics (extended/nonlinear moving target defense) \cite{griffioen2020moving,kanellopoulos2019moving,tian2019moving,weerakkody2015detecting,weerakkody2016moving,griffioen2019optimal,schellenberger2017detection,giraldo2019moving}. The time-varying sequence of parameters is kept hidden from the adversary, functioning as a moving target so that the adversary is kept from identifying the physical dynamics. This forces an attacker to leverage imperfect system information when constructing an attack, which in turn can reveal the attacker’s malicious behavior.

\subsection{Response}
While the recognition and detection of attacks is crucial, it is not sufficient for guaranteeing the safety and security of the CPS when under attack. Consequently, it is necessary to develop mechanisms and strategies that can be implemented to ensure resilience against attacks. Figure \ref{fig:Response} categorizes a variety of response mechanisms according to which part of the control infrastructure is being attacked.
\begin{figure}[h!]
\centering
\subfloat[Software Rejuvenation]{\includegraphics[width=0.33\columnwidth]{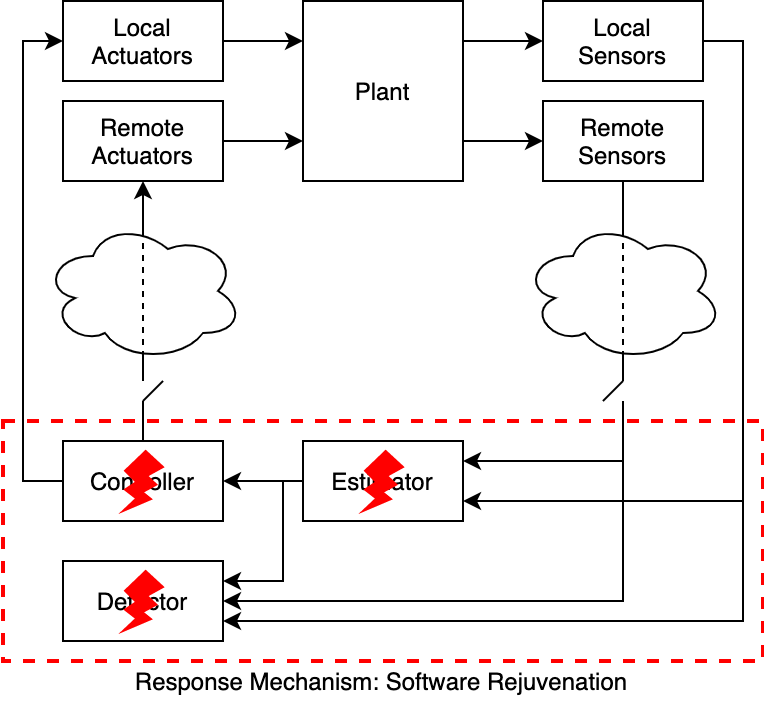}\label{fig:SoftwareRejuvenation}}
\subfloat[Overlay Networks]{\includegraphics[width=0.33\columnwidth]{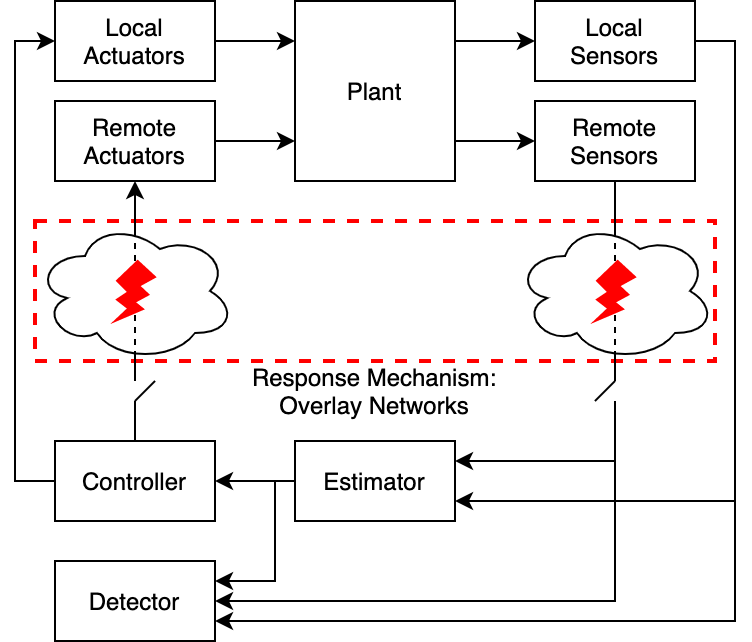}\label{fig:OverlayNetworks}}
\subfloat[Reconfigurable Control and Resilient Estimation]{\includegraphics[width=0.33\columnwidth]{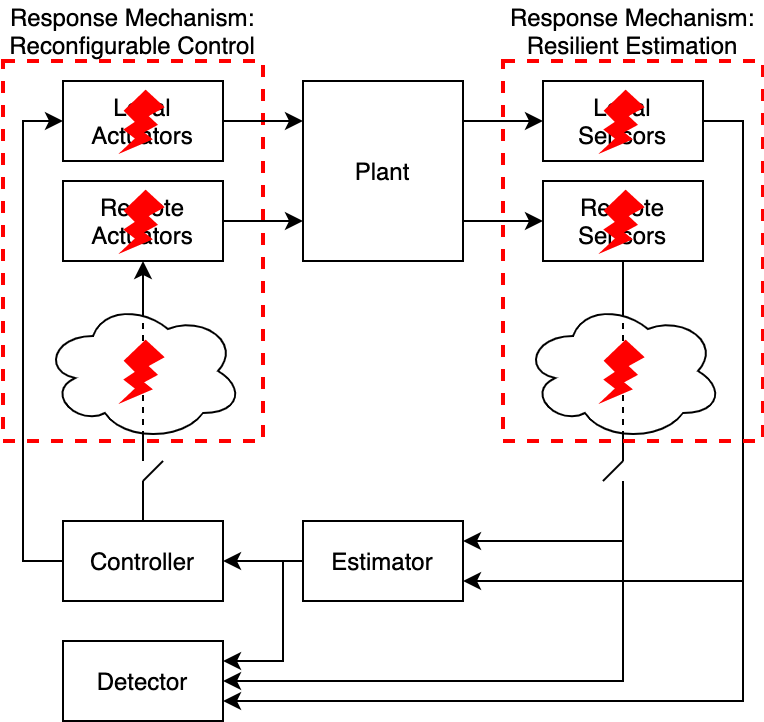}\label{fig:ReconfigurableControlResilientEstimation}}
\caption{Response mechanisms for attacks on the (a) computational, (b) communication, and (c) physical infrastructure.}
\label{fig:Response}
\end{figure}
When the control software is being attacked (Figure \ref{fig:SoftwareRejuvenation}), software rejuvenation can be used to achieve resilience. When the communication network is being attacked (Figure \ref{fig:OverlayNetworks}), overlay networks can be used to achieve resilience. When the physical actuators and/or the actuator communication channels are being attacked (Figure \ref{fig:ReconfigurableControlResilientEstimation}), reconfigurable control can be used to achieve resilience, and when the physical sensors and/or the sensor communication channels are being attacked (Figure \ref{fig:ReconfigurableControlResilientEstimation}), resilient estimation can be used to achieve resilience.

Software rejuvenation is an established method for dealing with so-called software aging in traditional computing systems, or failures that occur when a running program encounters a state that was not anticipated when the software was designed \cite{huang1995software,cotroneo2014survey}. In CPSs, software rejuvenation consists of periodically refreshing the run-time system with a secure and trusted copy of the control software, recovering the system from attacks during time periods called secure execution intervals when the control software can be trusted \cite{arroyo2017fired,abdi2018guaranteed,abdi2018preserving,arroyosecured}. The inertia in the physical dynamics of the system can be leveraged to keep the system safe while the control software is being refreshed, using tools from system theory to decide when the software should be refreshed \cite{romagnoli2019design,romagnoli2019safety,romagnoli2020robust,griffioen2019secure,romagnoli2020software,griffioen2020secure}.

Overlay networks can be used to ensure CPS resilience against man-in-the-middle attacks and denial of service (DoS) attacks \cite{andersen2001resilient,germanus2010increasing,farris2004evaluation,khelil2010towards,deconinck2008robust,benson2013improving}. This is accomplished by routing around malicious nodes and links in a network, specifying the pathway that is used each time data is sent between the controller and the plant. This functionality can also be achieved using software defined networking or source routing. Reachability analysis can be used to ensure that the safety constraints are not violated, even when a large portion of the pathways across the network have been compromised by adversaries \cite{griffioen2021resilient}.

Reconfigurable control \cite{zhang2008bibliographical,lunze2008reconfigurable} and resilient estimation \cite{fawzi2014secure,pajic2016attack} utilize functional redundancy in the system, ensuring resilience when up to a particular subset of actuator commands and sensor measurements are compromised. Reconfigurable control adapts the control policy to use those actuators that have not been compromised to preserve functionality, while resilient estimation constructs an accurate state estimate from those sensor measurements that have not yet been compromised.

\subsection{Trust and Time}
Each of these detection and response mechanisms leverage two important degrees of freedom in securing CPSs: trust and time. First, there must exist a component of the system that serves as the root of trust. For physical watermarking and the moving target defense this component is the pseudo-random number generator, for software rejuvenation this is the secure hypervisor, for overlay networks this is the set of uncompromised pathways, and for reconfigurable control and resilient estimation this is the set of uncompromised actuators or sensors. Second, the time periods where certain components and information may be trusted need to be identified and used to the defender’s advantage. For software rejuvenation this time period corresponds to the secure execution interval when the software can be trusted, for overlay networks this is the set of times where data is sent over uncompromised pathways, and for reconfigurable control and resilient estimation this is the set of times where a subset of the actuators or sensors can be trusted. Some approaches have leveraged decentralized event-triggered control to design protocols that minimize the amount of time components are untrusted or potentially compromised \cite{griffioen2021reducing,griffioen2020decentralized}.

In designing resilient detection and response mechanisms, we want to create time periods where we can trust certain components and information and use those components and information to recover the system from any potential attacks. Consequently, this article seeks to analyze those periods of time, investigating how long the components need to be restored to full functionality to ensure safety against stealthy attacks. We investigate how long the system can tolerate all the components malfunctioning or being manipulated by a stealthy adversary before restoration of those components must take place. This analysis will aid system operators as they design particular response strategies, no matter what the response strategy is. We provide sufficient conditions that the response strategy should meet in order to ensure the probabilistic safety of the overall system against stealthy attacks.

In summary, then, this article answers the following question: for a system designer who wants to ensure that the state remains within the safety constraints with a particular probability, what is a conservative estimate of the maximum amount of time the system can handle being under a stealthy attack? The response strategy should then be designed to ensure that the system can never be under stealthy attack for any longer than this amount of time.

\section{Problem Formulation}

\subsection{System Model}
To begin, we introduce the model for the system under consideration. We model the CPS as a discrete time linear time invariant system where the system dynamics are given by
\begin{align}
\label{StateDynamics}
x_{k+1} &=
\begin{cases}
Ax_k + Bu_k + w_k & \text{normal operation,} \\
Ax_k + B(u_k+u_k^a) + w_k & \text{attack,}
\end{cases} \\
\label{SensorDynamics}
y_k &=
\begin{cases}
Cx_k + v_k & \text{normal operation,} \\
Cx_k + y_k^a + v_k & \text{attack,}
\end{cases}
\end{align}
where normal operation is guaranteed when the system operates securely, free from attacks. Here $x_k\in\mathbb{R}^n$ represents the system state at time step $k$, $y_k\in\mathbb{R}^m$ denotes the sensor measurements, and $u_k\in\mathbb{R}^\ell$ is the control input vector. To capture uncertainty we consider independent and identically distributed (i.i.d.) Gaussian process noise $w_k\sim\mathcal{N}(0,Q)$ and i.i.d. Gaussian sensor noise $v_k\sim\mathcal{N}(0,R)$. We assume that $(A,B)$ and $(A,Q^\frac{1}{2})$ are stabilizable, $(A,C)$ is detectable, and $R\succ0$.

In this article, we consider an adversary who can perform integrity attacks, where the attacker is able to corrupt all of the inputs and outputs, denoted by adding a bias of $u_k^a$ and $y_k^a$ to the control inputs and sensor measurements, respectively \cite{cardenas2008secure,teixeira2015secure}. Without loss of generality, an attack is assumed to begin at time step $k=0$. This can be carried out for example through a memory corruption attack where the adversary corrupts the control software and injects malicious code \cite{szekeres2013sok}. Alternatively, this can be carried out through a man-in-the-middle attack where the adversary intercepts and modifies the data passing over the network. This can also be carried out by modifying the physical actuators and sensors themselves, where an adversary may change the settings of the actuators or the environment surrounding the system sensors. Motivated by the potential resources of sophisticated malicious adversaries, we consider an attacker who can modify all the inputs and outputs.

\begin{remark}
In the case where an adversary carries out a memory corruption attack, the results presented in this article are only applicable when the state estimator and detector are not corrupted under attack. This can be obtained, for instance, by implementing them within a secure hypervisor. However, no such qualification needs to be made if the adversary carries out the attack through other means, including man-in-the-middle attacks and attacks on the physical actuators and sensors themselves.
\end{remark}

We will also consider the possibility that an attacker has detailed system knowledge, including knowledge of the plant as well as the ability to read all of the inputs and outputs. This can be obtained through knowledge of the first principles which govern the dynamics or through system identification stemming from passive observation of the inputs and outputs. This knowledge, when combined with the ability to corrupt all the inputs and outputs, can lead to powerful undetectable attacks \cite{smith2015covert}. For example, an attacker can subtract his or her effect on the system dynamics by choosing an arbitrary sequence of attack inputs $\{u_k^a\}$ and then constructing the sensor measurement bias according to
\begin{equation}
y_k^a = -Cx_k^a, \quad x_{k+1}^a = Ax_k^a + Bu_k^a, \quad x_0^a = 0.
\end{equation}
In this way, an attacker can arbitrarily perturb the system along the controllable subspace $(A,B)$. Due to the linearity of the system, the probability distribution of the outputs under such an attack is identical to the probability distribution of the outputs under normal operation. Consequently, no standard bad data detector can recognize this adversarial behavior, and as a result this behavior is perfectly stealthy. In this article, we will analyze how resilient a system is to these stealthy attacks.

\subsection{State Estimation}
A Kalman filter is a linear estimator which can be used to compute the minimum mean squared error state estimate $\hat{x}_{k|k}$ given the set of previous measurements up through $y_k$. We assume that the system has been running for a long time so that the Kalman filter has converged to a fixed gain linear estimator given by
\begin{align}
\label{APrioriEstimate}
\hat{x}_{k+1|k} &= A\hat{x}_{k|k} + Bu_k, \\
\label{APosterioriEstimate}
\hat{x}_{k|k} &= \hat{x}_{k|k-1} + K(y_k-C\hat{x}_{k|k-1}),
\end{align}
where $\hat{x}_{k+1|k}$ is the \emph{a priori} state estimate and $\hat{x}_{k|k}$ is the \emph{a posteriori} state estimate. The Kalman gain $K$ and error covariance matrix $P$ are given by
\begin{align}
\label{KalmanGain}
K &= PC^T(CPC^T+R)^{-1}, \\
\label{CovarianceMatrix}
P &= APA^T-APC^T(CPC^T+R)^{-1}CPA^T + Q.
\end{align}
The control input is given by $u_k=L\hat{x}_{k|k}$, where $L$ is the control gain matrix. If the operator wants to minimize a quadratic function of the states and inputs according to
\begin{equation}
J = \lim_{\mathcal{T}\to\infty}\frac{1}{\mathcal{T}+1}\mathbb{E}\left[\sum_{k=0}^\mathcal{T}x_k^TWx_k+u_k^TVu_k\right],
\end{equation}
where $W$ and $V$ are positive definite matrices that respectively denote the relative cost of each state and control input, then the optimal $L$ is given by
\begin{align}
L &= -(B^TSB+V)^{-1}B^TSA, \\
\label{RiccatiEquation}
S &= A^TSA + W - A^TSB(B^TSB+V)^{-1}B^TSA,
\end{align}
where $S$ satisfies the Riccati equation in \eqref{RiccatiEquation} \cite{kumar2015stochastic}.

The residue $z_k$ is a function of the \emph{a priori} state estimate and represents the difference between the observed and expected value of the sensor measurements, given by
\begin{equation}
z_k\triangleq y_k-C\hat{x}_{k|k-1}.
\end{equation}
Depending on whether or not the system is under attack, the residue takes on different values, given by
\begin{equation}
z_k =
\begin{cases}
z_k^n & \text{normal operation,} \\
z_k^n+\Delta z_k & \text{attack,}
\end{cases}
\end{equation}
where $z_k^n$ represents the residue under normal operation which follows a normal distribution according to $z_k^n\sim\mathcal{N}(0,\Sigma)$ with $\Sigma\triangleq CPC^T+R$. When the system is under attack, the cumulative set of input and output biases $\{u_{0:k-1}^a,y_{0:k}^a\}$ results in the attacker exerting a bias $\Delta z_k$ on the residue, given by
\begin{equation}
\Delta z_k = y_k^a + \sum_{j=0}^{k-1}C(A(I-KC))^{k-1-j}(Bu_j^a-AKy_j^a).
\end{equation}

According to the system model in \eqref{StateDynamics}-\eqref{SensorDynamics} and the Kalman filter in \eqref{APrioriEstimate}-\eqref{APosterioriEstimate}, the error dynamics are given by
\begin{equation}
\medmuskip=3.7mu
\thinmuskip=3.7mu
\thickmuskip=3.7mu
e_{k+1|k+1} =
\begin{cases}
Ae_{k|k} + w_k - Kz_{k+1}^n \hspace{1.28cm}\text{normal operation,} \\
Ae_{k|k} + w_k - K(z_{k+1}^n+\Delta z_{k+1}) + Bu_k^a \text{ attack,}
\end{cases}
\end{equation}
where $e_{k|k}\triangleq x_k-\hat{x}_{k|k}$ is the \emph{a posteriori} state estimation error. The overall system dynamics are then given by
\begin{equation}
\label{OverallDynamics}
\bar{x}_{k+1} =
\begin{cases}
\mathcal{A}\bar{x}_k + \mathcal{I}w_k + \mathcal{K}z_{k+1}^n \hspace{1.25cm}\text{normal operation,} \\
\mathcal{A}\bar{x}_k + \mathcal{I}w_k + \mathcal{K}(z_{k+1}^n+\Delta z_{k+1}) + \mathcal{B}u_k^a \text{ attack,}
\end{cases}
\end{equation}
where $\bar{x}_k\triangleq\begin{bmatrix}x_k^T&e_{k|k}^T\end{bmatrix}^T$,
\begin{equation*}
\mathcal{A} \triangleq
\begin{bmatrix}
A+BL&-BL\\
0&A
\end{bmatrix}, ~
\mathcal{I} \triangleq
\begin{bmatrix}
I\\
I
\end{bmatrix}, ~
\mathcal{K} \triangleq
\begin{bmatrix}
0\\
-K
\end{bmatrix}, ~
\mathcal{B} \triangleq
\begin{bmatrix}
B\\
B
\end{bmatrix}.
\end{equation*}

\subsection{Detection}
To detect attacks on the CPS, a residue-based detector called the $\chi^2$ detector is utilized and is given by
\begin{equation}
g(z_{k-T+1:k}) = \sum_{i=k-T+1}^kz_i^T\Sigma^{-1}z_i \mathop{\gtrless}_{\mathcal{H}_0}^{\mathcal{H}_1} \eta,
\end{equation}
where $T$ represents the detector window that considers past measurements, and the detection statistic $g$ follows a $\chi^2$ distribution with $mT$ degrees of freedom under normal operation. The $\chi^2$ detector attempts to exploit this fact by testing to see if the residues follow the correct distribution. Here, $\eta$ represents the threshold of the bad data detector, $\mathcal{H}_0$ is the null hypothesis that represents normal system operation, and $\mathcal{H}_1$ is the alternative hypothesis that denotes that the system has deviated from normal operation. Measurements that are in close agreement with expected values generate small detection statistics and thus raise no alarm. Large deviations between measured and expected behavior will lead to a large detection statistic, thus causing an alarm.

When the system is under attack, the detection statistic $g$ follows a noncentral $\chi^2$ distribution, as shown in Lemmas \ref{NoncentralChiSquaredLemma} and \ref{DetectionStatisticDistributionLemma}.
\begin{lemma}
\label{NoncentralChiSquaredLemma}
Let $a\in\mathbb{R}^{n_a}\sim\mathcal{N}(\mu,\Omega)$. Then $a^T\Omega^{-1}a\sim\chi^2\left(n_a,\mu^T\Omega^{-1}\mu\right)$, where $n_a$ represents the degrees of freedom and $\mu^T\Omega^{-1}\mu$ is the noncentrality parameter.
\end{lemma}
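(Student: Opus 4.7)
The plan is to reduce this to the textbook definition of the noncentral chi-squared distribution by a whitening transformation. Since $\Omega$ is a covariance matrix of a genuinely $n_a$-dimensional Gaussian, it is symmetric positive definite, so it admits a symmetric positive definite square root $\Omega^{1/2}$ with inverse $\Omega^{-1/2}$. This will be the single algebraic device driving the whole argument.

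First, I would define $b \triangleq \Omega^{-1/2} a$. Because $a$ is Gaussian and linear transformations preserve Gaussianity, $b$ is again Gaussian with mean $\mathbb{E}[b] = \Omega^{-1/2}\mu$ and covariance $\Omega^{-1/2}\,\Omega\,\Omega^{-1/2} = I_{n_a}$. Thus the components $b_1,\dots,b_{n_a}$ are independent $\mathcal{N}((\Omega^{-1/2}\mu)_i,1)$ random variables. Second, I would rewrite the quadratic form as
\begin{equation*}
a^T \Omega^{-1} a \;=\; a^T \Omega^{-1/2}\Omega^{-1/2} a \;=\; b^T b \;=\; \sum_{i=1}^{n_a} b_i^2,
\end{equation*}
which is exactly a sum of squares of $n_a$ independent unit-variance normals with shifted means. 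By the definition of the noncentral chi-squared law, this sum has distribution $\chi^2(n_a, \lambda)$ where the noncentrality parameter is the squared norm of the mean vector, i.e.,
\begin{equation*}
\lambda \;=\; (\Omega^{-1/2}\mu)^T(\Omega^{-1/2}\mu) \;=\; \mu^T \Omega^{-1}\mu,
\end{equation*}
which is the claimed value.

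There is essentially no hard step here; the only point requiring mild care is the existence and symmetry of $\Omega^{-1/2}$, which follows from $\Omega\succ 0$ via the spectral theorem ($\Omega = U\Lambda U^T$ with $\Lambda\succ 0$, and $\Omega^{\pm 1/2} \triangleq U\Lambda^{\pm 1/2}U^T$). If one prefers to avoid invoking the noncentral $\chi^2$ definition directly, the distribution of $b^T b$ can alternatively be identified via its moment generating function, but the whitening argument above is the cleanest route and is what I would present.
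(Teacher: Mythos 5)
Your proof is correct and follows essentially the same route as the paper: whiten $a$ via a square root of $\Omega$ (the paper uses a general factor $Y$ with $YY^T=\Omega$, you use the symmetric square root $\Omega^{1/2}$), observe that the resulting vector is $\mathcal{N}(\Omega^{-1/2}\mu, I)$, and identify $b^Tb$ as noncentral $\chi^2$ with noncentrality $\mu^T\Omega^{-1}\mu$. The choice of symmetric versus general square root is immaterial, so no further changes are needed.
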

\begin{proof}
Let $b\triangleq Y^{-1}a$, where $a\in\mathbb{R}^{n_a}\sim\mathcal{N}(\mu,\Omega)$ and $YY^T=\Omega$. Then $b\sim\mathcal{N}(Y^{-1}\mu,I)$, and
\begin{equation*}
a^T\Omega^{-1}a=b^Tb=\sum_{i=1}^{n_a}b_i^2
\end{equation*}
is distributed according to the noncentral $\chi^2$ distribution with $n_a$ degrees of freedom and a noncentrality parameter given by $\mu^T\Omega^{-1}\mu$.
\end{proof}
\begin{lemma}
\label{DetectionStatisticDistributionLemma}
When the system is under attack,
\begin{equation}
g(z_{k-T+1:k}) \sim \chi^2\left(mT,\sum_{i=k-T+1}^k\Delta z_i^T\Sigma^{-1}\Delta z_i\right),
\end{equation}
\begin{equation}
\label{ExpectedValue}
\mathbb{E}[g(z_{k-T+1:k})] = mT + \sum_{i=k-T+1}^k\Delta z_i^T\Sigma^{-1}\Delta z_i,
\end{equation}
where $mT$ represents the degrees of freedom and $\sum_{i=k-T+1}^k\Delta z_i^T\Sigma^{-1}\Delta z_i$ is the noncentrality parameter.
\end{lemma}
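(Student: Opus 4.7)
The plan is to stack the $T$ residues into a single long Gaussian vector and reduce the claim to a direct application of Lemma \ref{NoncentralChiSquaredLemma}.

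First I would observe that under attack the residue at each time step decomposes as $z_i = z_i^n + \Delta z_i$, where $z_i^n \sim \mathcal{N}(0,\Sigma)$ and $\Delta z_i$ is the deterministic bias computed in the previous subsection from $\{u_{0:i-1}^a, y_{0:i}^a\}$. Since the Kalman filter has reached steady state and the innovations $\{z_i^n\}$ form a white sequence with covariance $\Sigma$ (a classical property of the steady-state Kalman filter), the attack-biases $\Delta z_i$ simply shift the mean without affecting the covariance structure. I would invoke this whiteness property explicitly, since it is what lets me treat the residues at different times as independent.

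Next I would define the stacked vector $\mathbf{z} \triangleq [z_{k-T+1}^T, \ldots, z_k^T]^T \in \mathbb{R}^{mT}$ together with the stacked mean $\boldsymbol{\mu} \triangleq [\Delta z_{k-T+1}^T, \ldots, \Delta z_k^T]^T$ and the block-diagonal covariance $\boldsymbol{\Omega} \triangleq I_T \otimes \Sigma$. By the preceding step, $\mathbf{z} \sim \mathcal{N}(\boldsymbol{\mu}, \boldsymbol{\Omega})$. Because $\boldsymbol{\Omega}^{-1} = I_T \otimes \Sigma^{-1}$ is also block-diagonal, one has
\begin{equation*}
g(z_{k-T+1:k}) = \sum_{i=k-T+1}^k z_i^T \Sigma^{-1} z_i = \mathbf{z}^T \boldsymbol{\Omega}^{-1} \mathbf{z}.
\end{equation*}
Applying Lemma \ref{NoncentralChiSquaredLemma} with $n_a = mT$ then gives the desired noncentral $\chi^2$ distribution, and the noncentrality parameter $\boldsymbol{\mu}^T \boldsymbol{\Omega}^{-1} \boldsymbol{\mu}$ collapses, again by the block-diagonal structure, to $\sum_{i=k-T+1}^k \Delta z_i^T \Sigma^{-1} \Delta z_i$.

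Finally, the expected value \eqref{ExpectedValue} follows from the standard fact that a noncentral $\chi^2$ random variable with $n$ degrees of freedom and noncentrality parameter $\lambda$ has mean $n + \lambda$. The main obstacle I anticipate is not algebraic but conceptual: one must be careful to justify the independence of the innovations $z_i^n$ across time in the steady-state regime, since without that property the covariance $\boldsymbol{\Omega}$ would not be block-diagonal and the cross terms would prevent the clean reduction to Lemma \ref{NoncentralChiSquaredLemma}. Once that standard Kalman-filter fact is invoked, the rest is bookkeeping.
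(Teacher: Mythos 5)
Your proposal is correct and follows essentially the same route as the paper: stack the $T$ residues into a single Gaussian vector with block-diagonal covariance $I_T\otimes\Sigma$, apply Lemma \ref{NoncentralChiSquaredLemma} with $n_a=mT$, and read off the mean from the standard noncentral $\chi^2$ formula. The only difference is that you explicitly justify the block-diagonal covariance via the whiteness of the steady-state innovations, a point the paper leaves implicit, which is a reasonable bit of added care rather than a new approach.
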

\begin{proof}
When the system is under attack, $z_k\sim\mathcal{N}(\Delta z_k,\Sigma)$. Then
\begin{equation*}
z_k^T\Sigma^{-1}z_k \sim \chi^2\left(m,\Delta z_k^T\Sigma^{-1}\Delta z_k\right)
\end{equation*}
according to Lemma \ref{NoncentralChiSquaredLemma}. Extending the result of Lemma \ref{NoncentralChiSquaredLemma} to
\begin{equation*}
g(z_{k-T+1:k})=
\begin{bmatrix} z_k \\ \vdots \\ z_{k-T+1} \end{bmatrix}^T
\begin{bmatrix}
\Sigma & \cdots & 0 \\
\vdots & \ddots & \vdots \\
0 & \cdots & \Sigma
\end{bmatrix}^{-1}
\begin{bmatrix} z_k \\ \vdots \\ z_{k-T+1} \end{bmatrix}
\end{equation*}
yields that $g(z_{k-T+1:k})$ is distributed according to the noncentral $\chi^2$ distribution with $mT$ degrees of freedom and a noncentrality parameter given by $\sum_{i=k-T+1}^k\Delta z_i^T\Sigma^{-1}\Delta z_i$. The expected value then follows directly from the properties of the noncentral $\chi^2$ distribution.
\end{proof}

\subsection{Saturation Limits}
Due to the physical saturation limits of the actuators, the vector of applied control inputs $\bar{u}_k$ will lie within the region $\bar{u}_k\in\mathcal{U}$, given by
\begin{equation}
\mathcal{U}\triangleq\left\{\bar{u}_k\middle|\bar{u}_k^TU\bar{u}_k\leq1\right\},
\end{equation}
where $U\succ0$ and
\begin{equation}
\bar{u}_k \triangleq
\begin{cases}
u_k & \text{normal operation,} \\
u_k+u_k^a & \text{attack.}
\end{cases}
\end{equation}
We assume that under normal operation the probability of the designed control inputs $u_k=L\hat{x}_{k|k}$ lying outside these saturation limits is negligible so that the system operates within the region $\bar{x}_k\in\mathcal{X}$, given by
\begin{equation}
\mathcal{X}\triangleq\left\{\bar{x}_k\middle|\bar{x}_k^T\bar{L}^TU\bar{L}\bar{x}_k\leq1\right\},
\end{equation}
where $\bar{L}\triangleq\begin{bmatrix}L&-L\end{bmatrix}$. When the system is under attack, these saturation limits will constrain the set of possible control input biases an attacker can exert on the system so that $u_k^a$ will lie within the region $u_k^a\in\mathcal{U}^a(\bar{x}_k)$, given by
\begin{equation}
\mathcal{U}^a(\bar{x}_k)\triangleq\left\{u_k^a\middle|(u_k^a+\bar{L}\bar{x}_k)^TU(u_k^a+\bar{L}\bar{x}_k)\leq1\right\}.
\end{equation}

\section{Stealthy Adversaries}
We would like to ensure resilience against stealthy adversaries, which we define as adversaries that can bypass the $\chi^2$ detector. This definition is expressed as follows.
\begin{definition}
\label{StealthyDefinition}
A $(p_d,k')$-stealthy adversary designs $\Delta z_{0:k'}$ such that $p(g(z_{k-T+1:k})>\eta|\mathcal{H}_1)\leq p_d$ $\forall k\in\{0,\cdots,k'\}$.
\end{definition}
In other words, a $(p_d,k')$-stealthy adversary designs the biases on the residues such that the probability of detection is less than or equal to $p_d$ through time step $k'$.

\subsection{$(p_d,k')$-Stealthy Bias Set}
To analyze the resilience of the system against stealthy adversaries, we first quantify the set of biases an adversary can exert on the residues that bypass the $\chi^2$ detector. Theorem \ref{BiasSetTheorem} describes this set.
\begin{theorem}
\label{BiasSetTheorem}
The full set of biases $\Delta z_{0:k'}$ that are $(p_d,k')$-stealthy is given by
\begin{equation}
\label{StealthyBiasSet}
Z_{0:k'}\triangleq\bigcap_{k=0}^{k'}\left\{\Delta z_{0:k'}\middle|\sum_{i=k-T+1}^k\Delta z_i^T\Sigma^{-1}\Delta z_i\leq\bar{\lambda}\right\},
\end{equation}
where
\begin{equation}
\label{NonCentralityMaximization}
\bar{\lambda}\triangleq\max_\lambda\lambda\quad\text{s.t.}~Q_{\frac{1}{2}mT}\left(\sqrt{\lambda},\sqrt{\eta}\right)\leq p_d,
\end{equation}
and $Q_{\frac{1}{2}mT}(\sqrt{\lambda},\sqrt{\eta})$ is the generalized Marcum $Q$-function, given by
\begin{equation}
Q_{\frac{1}{2}mT}(\sqrt{\lambda},\sqrt{\eta}) \triangleq 1-e^{-\frac{1}{2}\lambda}\sum_{j=0}^\infty\frac{(\frac{1}{2}\lambda)^j}{j!}\frac{\int_0^{\frac{1}{2}\eta}t^{\frac{1}{2}mT+j-1}e^{-t}dt}{\int_0^\infty t^{\frac{1}{2}mT+j-1}e^{-t}dt}.
\end{equation}
\end{theorem}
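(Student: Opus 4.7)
The plan is to translate the stealthiness condition of Definition \ref{StealthyDefinition} directly into a scalar constraint on the noncentrality parameter of the detection statistic at each time step, using Lemma \ref{DetectionStatisticDistributionLemma} together with the monotonicity of the Marcum $Q$-function in its first argument, and then take the intersection of the resulting constraints across $k \in \{0, \ldots, k'\}$.

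First I would apply Lemma \ref{DetectionStatisticDistributionLemma} to rewrite, for every $k$,
\begin{equation*}
p\bigl(g(z_{k-T+1:k}) > \eta \,\big|\, \mathcal{H}_1\bigr) = Q_{\frac{1}{2}mT}\!\left(\sqrt{\lambda_k},\sqrt{\eta}\right),
\end{equation*}
where $\lambda_k \triangleq \sum_{i=k-T+1}^k \Delta z_i^T \Sigma^{-1} \Delta z_i$ is the noncentrality parameter and the generalized Marcum $Q$-function is, by construction, the complementary CDF of a $\chi^2(mT,\lambda_k)$ random variable evaluated at $\eta$. Thus the $(p_d,k')$-stealthiness requirement becomes $Q_{\frac{1}{2}mT}(\sqrt{\lambda_k},\sqrt{\eta}) \leq p_d$ for all $k \in \{0,\ldots,k'\}$.

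Next I would argue that $\lambda \mapsto Q_{\frac{1}{2}mT}(\sqrt{\lambda},\sqrt{\eta})$ is nondecreasing. This is a standard stochastic-dominance fact: a $\chi^2(mT,\lambda)$ random variable can be represented as a Poisson mixture of central chi-squared random variables with parameter $\lambda/2$, and increasing $\lambda$ shifts the Poisson mixing distribution to the right, so the tail probability above any fixed threshold $\eta$ is nondecreasing in $\lambda$. Combined with the definition of $\bar{\lambda}$ in \eqref{NonCentralityMaximization} as the largest $\lambda$ for which this tail probability does not exceed $p_d$, this monotonicity yields the equivalence
\begin{equation*}
Q_{\frac{1}{2}mT}\!\left(\sqrt{\lambda_k},\sqrt{\eta}\right) \leq p_d \quad\Longleftrightarrow\quad \lambda_k \leq \bar{\lambda}.
\end{equation*}

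Finally, intersecting the resulting per-time-step constraints over $k \in \{0,\ldots,k'\}$ reproduces the set $Z_{0:k'}$ in \eqref{StealthyBiasSet}, completing the characterization. The main obstacle is the monotonicity step: although well known, it must be invoked carefully since $\bar{\lambda}$ is defined via a maximization rather than by solving an equation, so without monotonicity it would not be immediate that $Q_{\frac{1}{2}mT}(\sqrt{\lambda_k},\sqrt{\eta}) \leq p_d$ is equivalent to $\lambda_k \leq \bar{\lambda}$; with monotonicity the equivalence is both necessary and sufficient, which is what lets the theorem characterize the \emph{full} stealthy bias set rather than merely a subset.
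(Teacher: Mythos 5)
Your proposal is correct and follows essentially the same route as the paper: Lemma \ref{DetectionStatisticDistributionLemma} gives the noncentral $\chi^2$ law under attack, the detection probability is identified with the generalized Marcum $Q$-function evaluated at the noncentrality parameter $\lambda_k=\sum_{i=k-T+1}^k\Delta z_i^T\Sigma^{-1}\Delta z_i$, and intersecting the per-time-step constraints over $k\in\{0,\ldots,k'\}$ yields $Z_{0:k'}$. The only difference is that you make explicit (with a Poisson-mixture stochastic-dominance argument) the monotonicity of $\lambda\mapsto Q_{\frac{1}{2}mT}(\sqrt{\lambda},\sqrt{\eta})$ needed for the equivalence $\lambda_k\leq\bar{\lambda}\iff Q_{\frac{1}{2}mT}(\sqrt{\lambda_k},\sqrt{\eta})\leq p_d$, a fact the paper instead defers to Lemma \ref{MarcumQLemma} (cited from the literature) and Lemma \ref{MarcumQCorollary}.
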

\begin{proof}
Lemma \ref{DetectionStatisticDistributionLemma} states that when the system is under attack, the detection statistic follows a noncentral $\chi^2$ distribution with a noncentrality parameter that is a quadratic function of the set of biases $\Delta z_{k-T+1:k}$. Consequently, the full set of biases that are $(p_d,k')$-stealthy is given by maximizing the noncentrality parameter subject to the conditions of $(p_d,k')$-stealthiness provided in Definition \ref{StealthyDefinition}. These conditions are set forth in the constraint of \eqref{NonCentralityMaximization} and the intersection of the sets in \eqref{StealthyBiasSet}, where the relationship between the generalized Marcum $Q$-function and the cumulative distribution function of the noncentral $\chi^2$ distribution is given by
\begin{equation*}
Q_{\frac{1}{2}mT}\left(\sqrt{\lambda},\sqrt{\eta}\right) = p\left(g(z_{k-T+1:k})>\eta\middle|\mathcal{H}_1\right).
\end{equation*}
Letting $\lambda$ represent the noncentrality parameter, \eqref{NonCentralityMaximization} maximizes this parameter and uses this maximum value to define the set in \eqref{StealthyBiasSet}.
\end{proof}

Lemmas \ref{MarcumQLemma} and \ref{MarcumQCorollary} show that the optimization problem in \eqref{NonCentralityMaximization} can be solved with a unique solution for $\bar{\lambda}$, allowing us to quantify the full set of $(p_d,k')$-stealthy biases $Z_{0:k'}$.
\begin{lemma}[\cite{sun2010monotonicity}]
\label{MarcumQLemma}
The generalized Marcum $Q$-function $Q_{\frac{1}{2}mT}(\sqrt{\lambda},\sqrt{\eta})$ is strictly increasing in $\frac{1}{2}mT$ and $\sqrt{\lambda}$ for all $\sqrt{\lambda}\geq0$, $\sqrt{\eta}>0$, and $\frac{1}{2}mT>0$. It is strictly decreasing in $\sqrt{\eta}$ for all $\sqrt{\lambda}\geq0$, $\sqrt{\eta}\geq0$, and $\frac{1}{2}mT>0$.
\end{lemma}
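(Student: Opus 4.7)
The plan is to rewrite $Q_{\frac{1}{2}mT}(\sqrt{\lambda},\sqrt{\eta})$ in a form that cleanly separates the three parameters. Using the Poisson-mixture representation of the noncentral $\chi^{2}$ distribution, the definition given in the statement rearranges to
\begin{equation*}
Q_{\frac{1}{2}mT}(\sqrt{\lambda},\sqrt{\eta}) = \sum_{j=0}^\infty \frac{e^{-\lambda/2}(\lambda/2)^j}{j!}\, S\!\left(\tfrac{1}{2}mT+j,\tfrac{1}{2}\eta\right),
\end{equation*}
where $S(a,x) \triangleq \Gamma(a)^{-1}\int_x^\infty t^{a-1}e^{-t}\,dt$ is the survival function of a $\text{Gamma}(a,1)$ random variable, and the outer weights are $\text{Poisson}(\lambda/2)$ probabilities. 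Each monotonicity claim will then reduce to a statement about $S(a,x)$ and the Poisson weights, after which positivity of both ingredients transports strict monotonicity through the sum.

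First, for $\sqrt{\eta}$: for every $a>0$, $\partial_\eta S(a,\tfrac{1}{2}\eta) = -\tfrac{1}{2}(\tfrac{1}{2}\eta)^{a-1}e^{-\eta/2}/\Gamma(a) < 0$ whenever $\eta>0$, so each summand is strictly decreasing and the Poisson-weighted sum inherits strict decrease. For $\sqrt{\lambda}$, I would combine two facts: (i) $S(\tfrac{1}{2}mT+j,\tfrac{1}{2}\eta)$ is strictly increasing in $j$, a consequence of $\text{Gamma}(a+1,1)\stackrel{d}{=}\text{Gamma}(a,1)+\text{Exp}(1)$ and the resulting strict stochastic dominance; and (ii) the Poisson family has monotone likelihood ratio in its mean, so as $\lambda$ grows its mass shifts stochastically toward larger $j$. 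The expectation of a strictly increasing function under a strictly stochastically larger distribution is strictly larger, giving the claim. For $\tfrac{1}{2}mT$, each summand $S(\tfrac{1}{2}mT+j,\tfrac{1}{2}\eta)$ is itself strictly increasing in its shape parameter by stochastic dominance of the Gamma family in the shape, and positive-weighted summation preserves strict monotonicity.

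The main obstacle is establishing strict monotonicity of $S(a,x)$ in the continuous shape parameter $a$. For integer increments, the elementary identity $S(a+1,x)-S(a,x) = x^{a}e^{-x}/\Gamma(a+1) > 0$ (obtained by one integration by parts on $\int_x^\infty t^{a}e^{-t}\,dt$) settles the matter, and this is sufficient for the present application when $\tfrac{1}{2}mT$ changes by an integer. In full generality one must sign $\partial_a S(a,x)$, which reduces to showing $\mathbb{E}[\ln T\mid T>x] > \mathbb{E}[\ln T]$ for $T\sim\text{Gamma}(a,1)$; this holds because conditioning on an upper tail raises the conditional law of $\ln T$ in the usual stochastic order. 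Handling the boundary cases ($\sqrt{\lambda}=0$ and $\sqrt{\eta}\to 0^{+}$) and stitching the three monotonicities together rigorously is exactly what \cite{sun2010monotonicity} accomplishes, and I would invoke that reference directly rather than reprove it in full.
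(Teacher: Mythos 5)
Your proposal is correct, but it is worth noting that the paper itself offers no proof at all: Lemma~\ref{MarcumQLemma} is imported verbatim from \cite{sun2010monotonicity}, so the ``paper's route'' is simply a citation, while you actually supply an argument. Your route --- rewriting $Q_{\frac{1}{2}mT}(\sqrt{\lambda},\sqrt{\eta})=\sum_{j\geq0}e^{-\lambda/2}\frac{(\lambda/2)^j}{j!}S\left(\tfrac{1}{2}mT+j,\tfrac{\eta}{2}\right)$ and pushing monotonicity through the Poisson weights and the normalized upper incomplete gamma $S$ --- is sound: the sign of $\partial_\eta S$ gives strict decrease in $\sqrt{\eta}$; the identity $S(a+1,x)-S(a,x)=x^ae^{-x}/\Gamma(a+1)>0$ together with the Poisson family's likelihood-ratio (hence stochastic) ordering in its mean gives strict increase in $\sqrt{\lambda}$; and your reduction of $\partial_a S(a,x)$ to $\mathbb{E}[\ln T\mid T>x]>\mathbb{E}[\ln T]$ for $T\sim\mathrm{Gamma}(a,1)$ is a correct and rather clean way to get strict increase in the order for real (not just integer) increments, so your deferral of that case to \cite{sun2010monotonicity} is more modest than necessary. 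You are also appropriately careful about the domain caveats that make the strictness claims true ($\sqrt{\eta}>0$ is needed for strict increase in $\sqrt{\lambda}$ and in the order, since at $\eta=0$ every summand equals $1$), which mirrors the hypotheses in the lemma statement. What the paper's citation buys is brevity and the fully general treatment of the Marcum $Q$-function of arbitrary real order; what your argument buys is a self-contained, elementary probabilistic proof of exactly the properties the paper needs --- in fact the only property actually used downstream (in Lemma~\ref{MarcumQCorollary}) is strict monotonicity in $\lambda$, which your stochastic-dominance step establishes completely.
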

\begin{lemma}
\label{MarcumQCorollary}
The solution to the optimization problem in \eqref{NonCentralityMaximization} is
\begin{equation}
\label{NonCentralityMaxSolution}
\bar{\lambda}\triangleq\lambda\quad\text{s.t.}~Q_{\frac{1}{2}mT}\left(\sqrt{\lambda},\sqrt{\eta}\right)=p_d.
\end{equation}
\end{lemma}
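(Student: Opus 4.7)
The plan is to reduce the one-dimensional optimization in \eqref{NonCentralityMaximization} to a root-finding problem by invoking the monotonicity established in Lemma \ref{MarcumQLemma}. First, I would observe that the map $\lambda \mapsto Q_{\frac{1}{2}mT}(\sqrt{\lambda},\sqrt{\eta})$ is strictly increasing on $[0,\infty)$, since $\sqrt{\lambda}$ is itself a strictly increasing function of $\lambda$ on that domain and Lemma \ref{MarcumQLemma} supplies strict monotonicity in $\sqrt{\lambda}$. Consequently, the feasible set $\{\lambda \geq 0 : Q_{\frac{1}{2}mT}(\sqrt{\lambda},\sqrt{\eta}) \leq p_d\}$ is an interval of the form $[0,\lambda^\star]$, so if the supremum is attained it must be at the right endpoint.

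Next, I would pin down the right endpoint by continuity and a limit argument. Continuity of $Q_{\frac{1}{2}mT}(\cdot,\sqrt{\eta})$ in its first argument is immediate from the absolutely convergent series/integral representation written out in Theorem \ref{BiasSetTheorem}, and one has the standard boundary behavior $Q_{\frac{1}{2}mT}(0,\sqrt{\eta}) = p\!\left(g(z_{k-T+1:k}) > \eta \,\middle|\, \mathcal{H}_0\right)$ (the nominal false alarm rate) together with $Q_{\frac{1}{2}mT}(\sqrt{\lambda},\sqrt{\eta}) \to 1$ as $\lambda \to \infty$. Under the natural design assumption that the stealth budget $p_d$ lies in the interval $\bigl[Q_{\frac{1}{2}mT}(0,\sqrt{\eta}),\,1\bigr)$, i.e., the adversary is willing to incur at least the baseline false alarm probability, the intermediate value theorem yields a $\lambda^\star$ solving $Q_{\frac{1}{2}mT}(\sqrt{\lambda^\star},\sqrt{\eta}) = p_d$, and strict monotonicity gives uniqueness.

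Combining the two steps, the unique root coincides with the right boundary of the feasible interval and thus with the maximizer, which is exactly \eqref{NonCentralityMaxSolution}. The main obstacle is bookkeeping rather than analysis: one must record the mild assumption that $p_d$ exceeds the baseline false alarm rate, since otherwise the feasible set collapses to $\{0\}$ and no equality solution exists in $(0,\infty)$. This assumption is innocuous in the attack model of Definition \ref{StealthyDefinition}, because any adversary willing to be caught less often than an honest system would trivially fail to produce a nonzero bias.
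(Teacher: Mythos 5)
Your proposal is correct and takes essentially the same route as the paper, whose entire proof is the observation from Lemma \ref{MarcumQLemma} that $Q_{\frac{1}{2}mT}\left(\sqrt{\lambda},\sqrt{\eta}\right)$ is strictly increasing in $\lambda$, so the constraint is active at the maximizer. Your added existence step via continuity and the intermediate value theorem, with the caveat that $p_d$ must be at least the baseline false alarm rate $Q_{\frac{1}{2}mT}\left(0,\sqrt{\eta}\right)$, simply makes explicit details the paper leaves implicit.
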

\begin{proof}
Follows directly from the fact that according to Lemma \ref{MarcumQLemma} the constraint in \eqref{NonCentralityMaximization}, given by the generalized Marcum $Q$-function, is strictly increasing in $\lambda$ for all $\lambda\geq0$, $\eta>0$, and $mT>0$.
\end{proof}
In other words, Lemma \ref{MarcumQCorollary} allows us to compute an exact value for $\bar{\lambda}$ by successively increasing $\lambda$ until $Q_{\frac{1}{2}mT}\left(\sqrt{\lambda},\sqrt{\eta}\right)=p_d$.

Having now exactly quantified the set of $(p_d,k')$-stealthy biases $Z_{0:k'}$, Theorem \ref{BiasSetTheoremProjection} projects this set onto $\Delta z_k$ to decouple the correlation of the biases over time, in turn over-approximating the set of $(p_d,k')$-stealthy biases.
\begin{theorem}
\label{BiasSetTheoremProjection}
If an adversary remains $(p_d,k')$-stealthy, then $\forall k\in\{0,\cdots,k'\}$, $\Delta z_k$ will lie within the region $\Delta z_k\in\mathcal{Z}$, given by
\begin{equation}
\mathcal{Z}\triangleq\left\{\Delta z_k\middle|\Delta z_k^T\Sigma^{-1}\Delta z_k\leq\bar{\lambda}\right\}.
\end{equation}
\end{theorem}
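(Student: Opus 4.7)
The plan is to derive the componentwise bound directly from the cumulative bound provided by Theorem \ref{BiasSetTheorem}, exploiting nonnegativity of the summands that make up the noncentrality parameter. Fix an arbitrary $k\in\{0,\ldots,k'\}$. By Theorem \ref{BiasSetTheorem}, $(p_d,k')$-stealthiness implies $\Delta z_{0:k'}\in Z_{0:k'}$, so in particular the constraint indexed by this $k$ must hold:
\begin{equation*}
\sum_{i=k-T+1}^{k}\Delta z_i^T\Sigma^{-1}\Delta z_i \;\leq\; \bar{\lambda}.
\end{equation*}

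Next I would observe that $\Sigma=CPC^T+R\succ 0$ (since $R\succ 0$), hence $\Sigma^{-1}\succ 0$, so every summand $\Delta z_i^T\Sigma^{-1}\Delta z_i$ is nonnegative. Dropping all summands except the one at $i=k$ yields $\Delta z_k^T\Sigma^{-1}\Delta z_k\leq\bar{\lambda}$, which is exactly the defining inequality of $\mathcal{Z}$. Since $k$ was arbitrary in $\{0,\ldots,k'\}$, the conclusion follows for every such $k$.

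The only subtlety worth acknowledging is the boundary regime $k<T-1$, where the window $\{k-T+1,\ldots,k\}$ formally extends before the attack start time $k=0$. Under the standard convention (implicit in the derivation of $\Delta z_k$ from the residue history) that $\Delta z_i=0$ for $i<0$, those extra terms vanish and the monotonicity argument is unaffected. There is no real obstacle in the proof: positive definiteness of $\Sigma^{-1}$ is what makes the sum monotone in each individual quadratic form, reducing the theorem to a one-line consequence of Theorem \ref{BiasSetTheorem}. The conservatism of the projection — and the reason it is described as an over-approximation — is precisely that $\mathcal{Z}^{k'+1}$ allows every timestep to independently saturate the full budget $\bar{\lambda}$, whereas the intersection in \eqref{StealthyBiasSet} forces that budget to be shared across each sliding window of length $T$.
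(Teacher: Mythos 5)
Your proposal is correct and is essentially the paper's own argument: the paper phrases it as projecting each window constraint of $Z_{0:k'}$ onto $\Delta z_k$, which amounts to exactly your step of dropping the nonnegative summands (valid since $\Sigma^{-1}\succ0$) from the window constraint containing index $k$. Your explicit handling of the boundary convention $\Delta z_i=0$ for $i<0$ and your remark on why the result is only an over-approximation are consistent with, and slightly more detailed than, the paper's one-line proof.
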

\begin{proof}
Follows directly from projecting the set $Z_{0:k'}$ onto $\Delta z_k$. Projecting $\{\Delta z_{0:k'}|\sum_{i=j-T+1}^j\Delta z_i^T\Sigma^{-1}\Delta z_i\leq\bar{\lambda}\}$ onto $\Delta z_k$ yields $\Delta z_k\in\mathcal{Z}$ $\forall j=\{k,\cdots,k+T-1\}$ and $\Delta z_k\in\mathbb{R}^m$ otherwise. Taking the intersection of all these projections over $j=\{0,\cdots,k'\}$ yields $\mathcal{Z}$.
\end{proof}
\begin{remark}
Note that Theorem \ref{BiasSetTheoremProjection} states that for a $(p_d,k')$-stealthy adversary, $\Delta z_k\in\mathcal{Z}$ $\forall k\in\{0,\cdots,k'\}$. Since $\mathcal{Z}$ is time-invariant, $\Delta z_k$ will still lie within this region as the attacker's stealthy time horizon $k'\to\infty$. Consequently, the region $\mathcal{Z}$ is an over-approximation of the set of biases that are $p_d$-stealthy, where $p_d$-stealthiness is defined as follows.
\end{remark}
\begin{definition}
A $p_d$-stealthy adversary designs $\Delta z_{0:k'}$ such that $p(g(z_{k-T+1:k})>\eta|\mathcal{H}_1)\leq p_d$ $\forall k$.
\end{definition}
In other words, a $p_d$-stealthy adversary designs the biases on the residues such that the probability of detection is less than or equal to $p_d$ for all time.

\subsection{Alternate $(p_d,k')$-Stealthy Bias Set}
We can quantify an alternate set of biases that bypass the $\chi^2$ detector. While the relationship between the stealthy set of biases and the parameters $p_d$, $\eta$, and $T$ is not directly seen in \eqref{StealthyBiasSet}, this alternate stealthy bias set more directly depicts this relationship. Lemma \ref{DetectionBoundLemma} and Theorem \ref{AlternateBiasSetTheorem} describe this set.
\begin{lemma}
\label{DetectionBoundLemma}
The probability of detection is upper bounded according to
\begin{equation}
\medmuskip=2.17mu
\thinmuskip=2.17mu
\thickmuskip=2.17mu
p\left(g(z_{k-T+1:k})\geq\eta\middle|\mathcal{H}_1\right) \leq \frac{1}{\eta}\left(mT + \sum_{i=k-T+1}^k\Delta z_i^T\Sigma^{-1}\Delta z_i\right).
\end{equation}
\end{lemma}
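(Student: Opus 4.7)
The plan is to prove this by a direct application of Markov's inequality, combined with the expected value computation already established in Lemma \ref{DetectionStatisticDistributionLemma}. The key observation is that the detection statistic $g(z_{k-T+1:k}) = \sum_{i=k-T+1}^k z_i^T \Sigma^{-1} z_i$ is a sum of quadratic forms with $\Sigma^{-1} \succ 0$, hence a nonnegative random variable under either hypothesis. This nonnegativity is exactly what Markov's inequality requires.

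First, I would note that under the alternative hypothesis $\mathcal{H}_1$ (system under attack), $g(z_{k-T+1:k}) \geq 0$ almost surely. Applying Markov's inequality to this nonnegative random variable yields
\begin{equation*}
p\bigl(g(z_{k-T+1:k}) \geq \eta \bigm| \mathcal{H}_1\bigr) \leq \frac{\mathbb{E}[g(z_{k-T+1:k}) \mid \mathcal{H}_1]}{\eta}.
\end{equation*}
Next, I would invoke Lemma \ref{DetectionStatisticDistributionLemma}, specifically equation \eqref{ExpectedValue}, which gives the closed-form expression
\begin{equation*}
\mathbb{E}[g(z_{k-T+1:k}) \mid \mathcal{H}_1] = mT + \sum_{i=k-T+1}^k \Delta z_i^T \Sigma^{-1} \Delta z_i,
\end{equation*}
arising from the fact that the expected value of a noncentral $\chi^2$ random variable equals its degrees of freedom plus its noncentrality parameter. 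Substituting this into the Markov bound immediately yields the claimed inequality.

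There is essentially no obstacle here: the proof is a two-line combination of Markov's inequality and a previously established moment formula. The only thing worth being careful about is verifying the nonnegativity hypothesis for Markov's inequality, which is immediate since $\Sigma \succ 0$ implies $z_i^T \Sigma^{-1} z_i \geq 0$ for every realization. No additional structural properties of the noncentral $\chi^2$ distribution beyond its mean are required.
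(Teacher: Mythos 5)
Your proof is correct and follows essentially the same route as the paper's: Markov's inequality applied to the nonnegative statistic $g(z_{k-T+1:k})$, combined with the expected value of the noncentral $\chi^2$ distribution from \eqref{ExpectedValue}. The extra remark verifying nonnegativity via $\Sigma\succ0$ is a fine (if implicit in the paper) detail, and nothing more is needed.
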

\begin{proof}
According to the Markov inequality and the expected value of the noncentral $\chi^2$ distribution given in \eqref{ExpectedValue},
\begin{equation*}
\begin{split}
&p\left(g(z_{k-T+1:k})\geq\eta\middle|\mathcal{H}_1\right) \leq \mathbb{E}\left[g(z_{k-T+1:k})\middle|\mathcal{H}_1\right]/\eta \\
& \hspace{3.1cm} = \frac{1}{\eta}\left(mT + \sum_{i=k-T+1}^k\Delta z_i^T\Sigma^{-1}\Delta z_i\right).
\end{split}
\end{equation*}
\end{proof}
\begin{theorem}
\label{AlternateBiasSetTheorem}
If an adversary designs $\Delta z_{0:k'}$ so that they lie within the set
\begin{equation}
\label{AlternateBiasSet}
\bigcap_{k=0}^{k'}\left\{\Delta z_{0:k'}\middle|\sum_{i=k-T+1}^k\Delta z_i^T\Sigma^{-1}\Delta z_i\leq\eta p_d-mT\right\},
\end{equation}
then the adversary will remain $(p_d,k')$-stealthy.
\end{theorem}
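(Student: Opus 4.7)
The plan is to chain together the Markov-type tail bound from Lemma \ref{DetectionBoundLemma} with the hypothesis on the biases, and then verify that the resulting per-time-step inequality is exactly what Definition \ref{StealthyDefinition} requires for $(p_d,k')$-stealthiness.

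First I would fix an arbitrary $k\in\{0,\ldots,k'\}$ and appeal to Lemma \ref{DetectionBoundLemma} to obtain
\begin{equation*}
p\bigl(g(z_{k-T+1:k})\geq\eta\bigm|\mathcal{H}_1\bigr)\leq\frac{1}{\eta}\left(mT+\sum_{i=k-T+1}^{k}\Delta z_i^T\Sigma^{-1}\Delta z_i\right).
\end{equation*}
Next, because $\Delta z_{0:k'}$ is assumed to lie in the intersection \eqref{AlternateBiasSet}, the summand for this particular $k$ satisfies $\sum_{i=k-T+1}^{k}\Delta z_i^T\Sigma^{-1}\Delta z_i\leq \eta p_d-mT$. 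Substituting this into the right-hand side of the Markov bound collapses the $mT$ terms and yields $p(g(z_{k-T+1:k})\geq\eta\mid\mathcal{H}_1)\leq p_d$.

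Since $k$ was arbitrary in $\{0,\ldots,k'\}$, the bound holds for every $k$ in that range, which matches Definition \ref{StealthyDefinition} exactly. I would close by noting that $p(g>\eta\mid\mathcal{H}_1)\leq p(g\geq\eta\mid\mathcal{H}_1)$, so Lemma \ref{DetectionBoundLemma} applies verbatim to the strict inequality in the definition of stealthiness. There is no real obstacle here; the statement is essentially a Markov-inequality corollary of Lemma \ref{DetectionBoundLemma}, and the only point worth flagging is the mild implicit assumption $\eta p_d\geq mT$, without which the feasible set in \eqref{AlternateBiasSet} is empty (or reduces to $\Delta z_i\equiv 0$) and the theorem is vacuous.
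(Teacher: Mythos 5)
Your proof is correct and rests on exactly the same ingredient as the paper's proof, namely the Markov-type bound of Lemma \ref{DetectionBoundLemma} applied for each $k\in\{0,\ldots,k'\}$; the paper merely packages this identical argument as a chain of set inclusions, so your pointwise version is a cleaner rendering of the same approach. Your closing remarks, that $p(g>\eta\mid\mathcal{H}_1)\leq p(g\geq\eta\mid\mathcal{H}_1)$ reconciles the inequality in Definition \ref{StealthyDefinition} with the lemma, and that the set \eqref{AlternateBiasSet} is nonempty only if $\eta p_d\geq mT$, are both accurate and do not affect the validity of the argument.
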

\begin{proof}
\begin{equation*}
\medmuskip=1.22mu
\thinmuskip=1.22mu
\thickmuskip=1.22mu
\begin{split}
&\bigcap_{k=0}^{k'}\left\{\Delta z_{0:k'}\middle|\sum_{i=k-T+1}^k\Delta z_i^T\Sigma^{-1}\Delta z_i\leq\eta p_d-mT\right\} \\
&\subseteq \bigcap_{k=0}^{k'}\Bigg\{\left\{\Delta z_{0:k'}\middle|p_d\geq\frac{1}{\eta}\left(mT + \sum_{i=k-T+1}^k\Delta z_i^T\Sigma^{-1}\Delta z_i\right)\right\} \\
& \hspace{1.2cm}\oplus\Bigg\{\Delta z_{0:k'}\Bigg|p\left(g(z_{k-T+1:k})\geq\eta\middle|\mathcal{H}_1\right)\leq p_d~\land \\
& \hspace{2.7cm}p_d<\frac{1}{\eta}\left(mT + \sum_{i=k-T+1}^k\Delta z_i^T\Sigma^{-1}\Delta z_i\right)\Bigg\}\Bigg\} \\
&= \bigcap_{k=0}^{k'}\Bigg\{\Bigg\{\Delta z_{0:k'}\Bigg|p_d\geq\frac{1}{\eta}\left(mT + \sum_{i=k-T+1}^k\Delta z_i^T\Sigma^{-1}\Delta z_i\right)~\land \\
& p\left(g(z_{k-T+1:k})\geq\eta\middle|\mathcal{H}_1\right)\leq\frac{1}{\eta}\left(mT + \sum_{i=k-T+1}^k\Delta z_i^T\Sigma^{-1}\Delta z_i\right)\Bigg\} \\
& \hspace{1.2cm}\oplus\Bigg\{\Delta z_{0:k'}\Bigg|p\left(g(z_{k-T+1:k})\geq\eta\middle|\mathcal{H}_1\right)\leq p_d~\land \\
& \hspace{2.7cm}p_d<\frac{1}{\eta}\left(mT + \sum_{i=k-T+1}^k\Delta z_i^T\Sigma^{-1}\Delta z_i\right)\Bigg\}\Bigg\} \\
&= \bigcap_{k=0}^{k'}\Bigg\{\Delta z_{0:k'}\Bigg|p\left(g(z_{k-T+1:k})\geq\eta\middle|\mathcal{H}_1\right)\leq \\
& \hspace{2.0cm}\min\left(p_d,\frac{1}{\eta}\left(mT + \sum_{i=k-T+1}^k\Delta z_i^T\Sigma^{-1}\Delta z_i\right)\right)\Bigg\} \\
&= \bigcap_{k=0}^{k'}\Bigg\{\Delta z_{0:k'}\Bigg|p\left(g(z_{k-T+1:k})\geq\eta\middle|\mathcal{H}_1\right)\leq p_d~\land \\
& p\left(g(z_{k-T+1:k})\geq\eta\middle|\mathcal{H}_1\right)\leq\frac{1}{\eta}\left(mT + \sum_{i=k-T+1}^k\Delta z_i^T\Sigma^{-1}\Delta z_i\right)\Bigg\} \\
&= \bigcap_{k=0}^{k'}\left\{\Delta z_{0:k'}\middle|p\left(g(z_{k-T+1:k})\geq\eta\middle|\mathcal{H}_1\right)\leq p_d\right\} \\
&= \left\{\Delta z_{0:k'}\middle|p\left(g(z_{k-T+1:k})\geq\eta\middle|\mathcal{H}_1\right)\leq p_d~\forall k\in\{0,\cdots,k'\}\right\},
\end{split}
\end{equation*}
where the first and fourth equalities follow directly from Lemma \ref{DetectionBoundLemma}.
\end{proof}
Theorem \ref{AlternateBiasSetTheorem} sets forth a subset of the full set of $(p_d,k')$-stealthy biases in \eqref{AlternateBiasSet}, showing how this set constricts with increases in the time window $T$ of the detector and grows with increases in $p_d$ and the threshold $\eta$ of the detector. Consequently, this knowledge can be used when designing the detector to choose appropriate values for $\eta$ and $T$ that restrict the set of stealthy biases.

\section{Safety}
We now investigate the amount of time the overall system can be under stealthy attack and still remain safe with a particular probability. For the ease of analysis, we divide up the dynamics in \eqref{OverallDynamics} according to the bounded uncertainties $\{\Delta z_{k+1},u_k^a\}$ and the stochastic uncertainties $\{w_k,z_{k+1}^n\}$. Letting $\bar{x}_k^1$ and $\bar{x}_k^2$ be defined such that $\bar{x}_k^1+\bar{x}_k^2\triangleq\bar{x}_k$, the overall dynamics in \eqref{OverallDynamics} can be represented as
\begin{align}
\label{DisturbanceDynamics}
\bar{x}_{k+1}^1 &=
\begin{cases}
\mathcal{A}\bar{x}_k^1 & \text{normal operation,} \\
\mathcal{A}\bar{x}_k^1 + \mathcal{K}\Delta z_{k+1} + \mathcal{B}u_k^a & \text{attack,}
\end{cases} \\
\label{StochasticDynamics}
\bar{x}_{k+1}^2 &= \mathcal{A}\bar{x}_k^2 +
\underbrace{\begin{bmatrix}
\mathcal{I} & \mathcal{K}
\end{bmatrix}}_{\bar{\mathcal{K}}}
\underbrace{\begin{bmatrix}
w_k \\
z_{k+1}^n
\end{bmatrix}}_{\bar{w}_k}.
\end{align}
In the next two subsections, we provide sufficient conditions for ensuring that $\bar{x}_k^1$ and $\bar{x}_k^2$ respectively remain within particular ellipsoidal sets when under normal operation and attack.

\subsection{Bounded Uncertainties}
Theorems \ref{DisturbanceNormalOperationTheorem} and \ref{DisturbanceAttackTheorem} quantify how quickly the Lyapunov function $\bar{x}_k^{1^T}\mathcal{P}_1\bar{x}_k^1$ associated with the bounded uncertainties system in \eqref{DisturbanceDynamics} either decreases or increases when the system is under normal operation or attack.
\begin{theorem}
\label{DisturbanceNormalOperationTheorem}
For the system in \eqref{DisturbanceDynamics} under normal operation,
\begin{equation}
\gamma\mathcal{P}_1-\mathcal{A}^T\mathcal{P}_1\mathcal{A}\succeq0\iff\bar{x}_{k+1}^{1^T}\mathcal{P}_1\bar{x}_{k+1}^1\leq\gamma\bar{x}_k^{1^T}\mathcal{P}_1\bar{x}_k^1,
\end{equation}
where $\gamma\geq0$ and $\mathcal{P}_1$ is a symmetric positive definite Lyapunov matrix.
\end{theorem}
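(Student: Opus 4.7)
The plan is to unwind both sides of the biconditional by substituting the normal-operation dynamics from \eqref{DisturbanceDynamics} and reducing everything to a statement about a single symmetric matrix acting as a quadratic form on $\mathbb{R}^{2n}$.

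First I would plug the deterministic recursion $\bar{x}_{k+1}^1=\mathcal{A}\bar{x}_k^1$ directly into the left-hand side of the proposed inequality, obtaining $\bar{x}_{k+1}^{1^T}\mathcal{P}_1\bar{x}_{k+1}^1=\bar{x}_k^{1^T}\mathcal{A}^T\mathcal{P}_1\mathcal{A}\bar{x}_k^1$. Subtracting the right-hand side and collecting terms rewrites the inequality as a single quadratic form,
\begin{equation*}
\bar{x}_k^{1^T}\bigl(\gamma\mathcal{P}_1-\mathcal{A}^T\mathcal{P}_1\mathcal{A}\bigr)\bar{x}_k^1\geq 0,
\end{equation*}
and I would observe that the matrix $M\triangleq\gamma\mathcal{P}_1-\mathcal{A}^T\mathcal{P}_1\mathcal{A}$ is symmetric since $\mathcal{P}_1=\mathcal{P}_1^T$.

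For the forward direction ($\Rightarrow$), I would note that $M\succeq 0$ means $v^TMv\geq 0$ for every $v\in\mathbb{R}^{2n}$; choosing $v=\bar{x}_k^1$ gives exactly the desired Lyapunov decrease. For the reverse direction ($\Leftarrow$), I would exploit the fact that the state $\bar{x}_k^1$ is unconstrained (any vector in $\mathbb{R}^{2n}$ can occur as an initial condition of the autonomous system), so the quadratic-form inequality must hold for every $\bar{x}_k^1\in\mathbb{R}^{2n}$; together with the symmetry of $M$ this is the standard characterization of positive semidefiniteness, giving $M\succeq 0$.

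I do not anticipate any real obstacle: the statement is a textbook Lyapunov computation for a linear autonomous system, and the only slightly delicate point is making explicit in the reverse direction that the implication is interpreted as holding for all trajectories (equivalently, all states), which is what allows the quadratic-form inequality to be promoted to a matrix inequality. No spectral or stability properties of $\mathcal{A}$ are needed, and $\gamma\geq 0$ plays no role beyond what is already encoded in $M\succeq 0$.
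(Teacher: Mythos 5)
Your proof is correct and follows essentially the same route as the paper: substitute the autonomous normal-operation dynamics $\bar{x}_{k+1}^1=\mathcal{A}\bar{x}_k^1$ and invoke the equivalence between $\gamma\mathcal{P}_1-\mathcal{A}^T\mathcal{P}_1\mathcal{A}\succeq0$ and the nonnegativity of the corresponding quadratic form over all states. The paper compresses this into a two-step chain of equivalences, while you merely spell out the two directions (and the quantification over all states in the reverse direction) more explicitly.
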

\begin{proof}
\begin{equation*}
\begin{split}
\gamma\mathcal{P}_1-\mathcal{A}^T\mathcal{P}_1\mathcal{A}\succeq0 &\iff \bar{x}_k^{1^T}\mathcal{A}^T\mathcal{P}_1\mathcal{A}\bar{x}_k^1\leq\gamma\bar{x}_k^{1^T}\mathcal{P}_1\bar{x}_k^1 \\
&\iff \bar{x}_{k+1}^{1^T}\mathcal{P}_1\bar{x}_{k+1}^1\leq\gamma\bar{x}_k^{1^T}\mathcal{P}_1\bar{x}_k^1
\end{split}
\end{equation*}
\end{proof}
Theorem \ref{DisturbanceNormalOperationTheorem} provides the condition guaranteeing that the Lyapunov function associated with the bounded uncertainties system in \eqref{DisturbanceDynamics} will change at a rate upper bounded by $\gamma$ when the system is under normal operation. To ensure that the system state converges towards the origin under normal operation, $\gamma\mathcal{P}_1-\mathcal{A}^T\mathcal{P}_1\mathcal{A}\succeq0$ should be satisfied with $\gamma\in[0,1)$ so that the Lyapunov function decreases over time.

\begin{theorem}
\label{DisturbanceAttackTheorem}
If $\exists\alpha_1\geq0$ such that
\begin{equation}
\label{QuadraticBoundednessLMI}
\begin{bmatrix}
\Gamma_{11} & \Gamma_{12} \\
\Gamma_{12}^T & \Gamma_{22}
\end{bmatrix}
\succeq 0,
\end{equation}
where
\begin{align}
\Gamma_{11}&\triangleq
\begin{bmatrix}
(\gamma_a-2\alpha_1)\mathcal{P}_1-\mathcal{A}^T\mathcal{P}_1\mathcal{A}+\alpha_1\bar{L}^TU\bar{L} & \alpha_1\bar{L}^TU\bar{L} \\
\alpha_1\bar{L}^TU\bar{L} & \alpha_1\bar{L}^TU\bar{L}
\end{bmatrix}, \nonumber\\
\Gamma_{12}&\triangleq
\begin{bmatrix}
-\mathcal{A}^T\mathcal{P}_1\mathcal{K} & \alpha_1\bar{L}^TU-\mathcal{A}^T\mathcal{P}_1\mathcal{B} \\
0 & \alpha_1\bar{L}^TU
\end{bmatrix}, \nonumber\\
\Gamma_{22}&\triangleq
\begin{bmatrix}
\frac{\alpha_1}{\bar{\lambda}}\Sigma^{-1}-\mathcal{K}^T\mathcal{P}_1\mathcal{K} & -\mathcal{K}^T\mathcal{P}_1\mathcal{B} \\
-\mathcal{B}^T\mathcal{P}_1\mathcal{K} & \alpha_1U-\mathcal{B}^T\mathcal{P}_1\mathcal{B}
\end{bmatrix}, \nonumber
\end{align}
then $\forall u_k^a\in\mathcal{U}^a(\bar{x}_k)$ and $\forall\Delta z_{k+1}\in\mathcal{Z}$,
\begin{equation}
\bar{x}_k^{1^T}\mathcal{P}_1\bar{x}_k^1\geq1 \implies \bar{x}_{k+1}^{1^T}\mathcal{P}_1\bar{x}_{k+1}^1\leq\gamma_a\bar{x}_k^{1^T}\mathcal{P}_1\bar{x}_k^1
\end{equation}
for system in \eqref{DisturbanceDynamics} when under attack, where $\gamma_a\geq0$ and $\mathcal{P}_1$ is a symmetric positive definite Lyapunov matrix.
\end{theorem}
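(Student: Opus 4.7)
The plan is to apply the $S$-procedure to reduce the constrained implication to the single LMI in \eqref{QuadraticBoundednessLMI}. I would stack the variables as $\xi \triangleq [\bar{x}_k^{1^T},\bar{x}_k^{2^T},\Delta z_{k+1}^T,u_k^{a^T}]^T$ and interpret the block matrix in \eqref{QuadraticBoundednessLMI} as the Gram matrix of a quadratic form $\xi^T\Gamma\xi$ in these four blocks. The first step is therefore purely algebraic: expand $\xi^T\Gamma\xi$ using the given block entries $\Gamma_{11},\Gamma_{12},\Gamma_{22}$.

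The core computation I would carry out is to recognize three aggregate pieces inside $\xi^T\Gamma\xi$. First, the cross terms involving $\mathcal{A}$, $\mathcal{K}$, and $\mathcal{B}$ in $\Gamma_{11},\Gamma_{12},\Gamma_{22}$ combine, via the dynamics \eqref{DisturbanceDynamics}, to give exactly $-[\mathcal{A}\bar{x}_k^1+\mathcal{K}\Delta z_{k+1}+\mathcal{B}u_k^a]^T\mathcal{P}_1[\mathcal{A}\bar{x}_k^1+\mathcal{K}\Delta z_{k+1}+\mathcal{B}u_k^a] = -\bar{x}_{k+1}^{1^T}\mathcal{P}_1\bar{x}_{k+1}^1$. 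Second, since $\bar{L}\bar{x}_k=\bar{L}\bar{x}_k^1+\bar{L}\bar{x}_k^2$, the $\alpha_1\bar{L}^TU\bar{L}$ entries in $\Gamma_{11}$, the $\alpha_1\bar{L}^TU$ entries in $\Gamma_{12}$, and the $\alpha_1U$ entry in $\Gamma_{22}$ assemble into $\alpha_1(u_k^a+\bar{L}\bar{x}_k)^TU(u_k^a+\bar{L}\bar{x}_k)$. Third, the remaining scalar pieces are $\gamma_a\bar{x}_k^{1^T}\mathcal{P}_1\bar{x}_k^1$, $-2\alpha_1\bar{x}_k^{1^T}\mathcal{P}_1\bar{x}_k^1$, and $\tfrac{\alpha_1}{\bar{\lambda}}\Delta z_{k+1}^T\Sigma^{-1}\Delta z_{k+1}$. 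Putting it together, the LMI is equivalent to
\begin{equation*}
\gamma_a\bar{x}_k^{1^T}\mathcal{P}_1\bar{x}_k^1-\bar{x}_{k+1}^{1^T}\mathcal{P}_1\bar{x}_{k+1}^1+\alpha_1\bigl[(u_k^a+\bar{L}\bar{x}_k)^TU(u_k^a+\bar{L}\bar{x}_k)-1\bigr]+\frac{\alpha_1}{\bar{\lambda}}\bigl[\Delta z_{k+1}^T\Sigma^{-1}\Delta z_{k+1}-\bar{\lambda}\bigr]+2\alpha_1\bigl[1-\bar{x}_k^{1^T}\mathcal{P}_1\bar{x}_k^1\bigr]\ge 0
\end{equation*}
holding for every choice of $\bar{x}_k^1,\bar{x}_k^2,\Delta z_{k+1},u_k^a$, once the scalars $\alpha_1,\alpha_1/\bar{\lambda},2\alpha_1$ on the last three bracketed terms are grouped with the appropriate quadratic pieces of $\xi^T\Gamma\xi$.

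The final step is the standard $S$-procedure conclusion: assume $\bar{x}_k^{1^T}\mathcal{P}_1\bar{x}_k^1\ge1$, $u_k^a\in\mathcal{U}^a(\bar{x}_k)$, and $\Delta z_{k+1}\in\mathcal{Z}$. Then each of the three bracketed terms is $\le 0$, so multiplying them by the nonnegative multipliers $\alpha_1,\alpha_1/\bar{\lambda},2\alpha_1$ and using the inequality above leaves $\gamma_a\bar{x}_k^{1^T}\mathcal{P}_1\bar{x}_k^1-\bar{x}_{k+1}^{1^T}\mathcal{P}_1\bar{x}_{k+1}^1\ge 0$, which is the desired conclusion.

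The only delicate part of this plan is bookkeeping the expansion in step two: because the saturation constraint is written in terms of $\bar{x}_k=\bar{x}_k^1+\bar{x}_k^2$ while the Lyapunov decrease is in terms of $\bar{x}_k^1$ alone, the off-diagonal $\bar{x}_k^1$-$\bar{x}_k^2$ blocks of $\Gamma_{11}$ and the $\bar{x}_k^2$-$u_k^a$ block of $\Gamma_{12}$ must exactly cancel the cross terms produced when completing the $(u_k^a+\bar{L}\bar{x}_k)^TU(u_k^a+\bar{L}\bar{x}_k)$ square. Verifying this cancellation is the main algebraic obstacle; everything afterward is a direct application of the $S$-procedure.
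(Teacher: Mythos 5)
Your proposal is correct and follows essentially the same route as the paper: both stack $\xi_k=[\bar{x}_k^{1^T}\;\bar{x}_k^{2^T}\;\Delta z_{k+1}^T\;u_k^{a^T}]^T$, identify \eqref{QuadraticBoundednessLMI} as the S-procedure certificate with multipliers $\alpha_1$, $\alpha_1/\bar{\lambda}$, $2\alpha_1$ attached to the saturation, stealthiness, and $\bar{x}_k^{1^T}\mathcal{P}_1\bar{x}_k^1\geq1$ constraints, and use the attack dynamics in \eqref{DisturbanceDynamics} to recognize the $\mathcal{A},\mathcal{K},\mathcal{B}$ cross terms as $-\bar{x}_{k+1}^{1^T}\mathcal{P}_1\bar{x}_{k+1}^1$. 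The only (immaterial) difference is that you invoke just the sufficient direction of the S-procedure directly via $\xi_k^T\Gamma\xi_k\geq0$, whereas the paper phrases it as an equivalence between the LMI and an implication of quadratic forms and then shows the three original constraints imply the combined quadratic constraint.
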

\begin{proof}
By using the S-procedure \cite{boyd1994linear}, \eqref{QuadraticBoundednessLMI} is equivalent to
\begin{equation*}
\begin{split}
&\xi_k^T
\begin{bmatrix}
-2\mathcal{P}_1+\bar{L}^TU\bar{L} & \bar{L}^TU\bar{L} & 0 & \bar{L}^TU \\
\bar{L}^TU\bar{L} & \bar{L}^TU\bar{L} & 0 & \bar{L}^TU \\
0 & 0 & \frac{1}{\bar{\lambda}}\Sigma^{-1} & 0 \\
U\bar{L} & U\bar{L} & 0 & U
\end{bmatrix}
\xi_k\leq0\implies\\
&\hspace{0.9cm}\xi_k^T
\begin{bmatrix}
\mathcal{A}^T\mathcal{P}_1\mathcal{A}-\gamma_a\mathcal{P}_1 & 0 & \mathcal{A}^T\mathcal{P}_1\mathcal{K} & \mathcal{A}^T\mathcal{P}_1\mathcal{B} \\
0 & 0 & 0 & 0 \\
\mathcal{K}^T\mathcal{P}_1\mathcal{A} & 0 & \mathcal{K}^T\mathcal{P}_1\mathcal{K} & \mathcal{K}^T\mathcal{P}_1\mathcal{B} \\
\mathcal{B}^T\mathcal{P}_1\mathcal{A} & 0 & \mathcal{B}^T\mathcal{P}_1\mathcal{K} & \mathcal{B}^T\mathcal{P}_1\mathcal{B}
\end{bmatrix}
\xi_k\leq0,
\end{split}
\end{equation*}
where $\xi_k\triangleq\begin{bmatrix}\bar{x}_k^{1^T}&\bar{x}_k^{2^T}&\Delta z_{k+1}^T&u_k^{a^T}\end{bmatrix}^T$. This in turn is equivalent to
\begin{equation}
\label{Implication1}
\begin{split}
& -2\bar{x}_k^{1^T}\mathcal{P}_1\bar{x}_k^1+\Delta z_{k+1}^T\frac{1}{\bar{\lambda}}\Sigma^{-1}\Delta z_{k+1}\\
&\hspace{0.3cm}+(u_k^a+\bar{L}\bar{x}_k^1+\bar{L}\bar{x}_k^2)^TU(u_k^a+\bar{L}\bar{x}_k^1+\bar{L}\bar{x}_k^2)\leq0 \\
&\hspace{2.9cm}\implies \bar{x}_{k+1}^{1^T}\mathcal{P}_1\bar{x}_{k+1}^1\leq\gamma_a\bar{x}_k^{1^T}\mathcal{P}_1\bar{x}_k^1
\end{split}
\end{equation}
for system in \eqref{DisturbanceDynamics} when under attack. Note that
\begin{equation}
\label{Implication2}
\begin{split}
&\begin{Bmatrix}
\bar{x}_k^{1^T}\mathcal{P}_1\bar{x}_k^1\geq1\\
\Delta z_{k+1}^T\Sigma^{-1}\Delta z_{k+1}\leq\bar{\lambda}\\
(u_k^a+\bar{L}\bar{x}_k)^TU(u_k^a+\bar{L}\bar{x}_k)\leq1
\end{Bmatrix}\implies \\
&\hspace{0.4cm}-2\bar{x}_k^{1^T}\mathcal{P}_1\bar{x}_k^1+\Delta z_{k+1}^T\frac{1}{\bar{\lambda}}\Sigma^{-1}\Delta z_{k+1} \\
&\hspace{0.7cm}+(u_k^a+\bar{L}\bar{x}_k^1+\bar{L}\bar{x}_k^2)^TU(u_k^a+\bar{L}\bar{x}_k^1+\bar{L}\bar{x}_k^2)\leq0. \\
\end{split}
\end{equation}
Taking \eqref{Implication1} and \eqref{Implication2} in conjunction with one another yields that $\forall u_k^a\in\mathcal{U}^a(\bar{x}_k)$ and $\forall\Delta z_{k+1}\in\mathcal{Z}$,
\begin{equation*}
\bar{x}_k^{1^T}\mathcal{P}_1\bar{x}_k^1\geq1 \implies \bar{x}_{k+1}^{1^T}\mathcal{P}_1\bar{x}_{k+1}^1\leq\gamma_a\bar{x}_k^{1^T}\mathcal{P}_1\bar{x}_k^1
\end{equation*}
for system in \eqref{DisturbanceDynamics} when under attack.
\end{proof}
Theorem \ref{DisturbanceAttackTheorem} provides a sufficient condition guaranteeing that the Lyapunov function associated with the bounded uncertainties system in \eqref{DisturbanceDynamics} will decrease at a rate of at least $\gamma_a$ if $\gamma_a\in[0,1)$ or will increase at a rate of at most $\gamma_a$ if $\gamma_a\geq1$ when the system is under $p_d$-stealthy attack, where the adversary exerts biases $u_k^a$ and $\Delta z_{k+1}$ within the saturation and stealthiness limits of $\mathcal{U}^a(\bar{x}_k)$ and $\mathcal{Z}$, respectively.

Having now quantified the rate at which the Lyapunov function decreases or increases for the system in \eqref{DisturbanceDynamics}, Theorem \ref{DisturbanceSetTheorem} quantifies the size of the ellipsoid in which $\bar{x}_k^1$ is guaranteed to lie. The size of this ellipsoid is a function of the Lyapunov rate parameters $\gamma$ and $\gamma_a$ as well as the amount of time the system has been vulnerable to attacks.
\begin{theorem}
\label{DisturbanceSetTheorem}
If $\gamma\mathcal{P}_1-\mathcal{A}^T\mathcal{P}_1\mathcal{A}\succeq0$, if $\exists\alpha_1\geq0$ such that \eqref{QuadraticBoundednessLMI} is satisfied, and if $\bar{x}_0^1\in\mathcal{E}_1(0,0)$, then
\begin{equation}
\medmuskip=2.51mu
\thinmuskip=2.51mu
\thickmuskip=2.51mu
\label{E1Definition}
\bar{x}_k^1\in\mathcal{E}_1(k,T_a)\triangleq\left\{\bar{x}_k^1\middle|\bar{x}_k^{1^T}\mathcal{P}_1\bar{x}_k^1\leq\max(1,\gamma^{k-T_a}\gamma_a^{T_a})\right\},
\end{equation}
where $T_a$ represents the total number of time steps the system has been under $p_d$-stealthy attack.
\end{theorem}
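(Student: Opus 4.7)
The natural approach is induction on $k$. Define $V_k \triangleq \bar{x}_k^{1T}\mathcal{P}_1\bar{x}_k^1$ and $M_k \triangleq \max(1, \gamma^{k-T_a}\gamma_a^{T_a})$, with inductive hypothesis $V_k \leq M_k$, where $T_a$ is interpreted as the running count of attack steps among $\{0,\ldots,k-1\}$, incremented each time a step is adversarial. The base case ($k=0$, $T_a=0$) is immediate: $M_0 = \max(1,1) = 1$, so the premise $\bar{x}_0^1 \in \mathcal{E}_1(0,0)$ matches exactly.

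For the inductive step, I would branch on whether step $k$ is under normal operation or attack. Under normal operation, $T_a$ does not advance, and Theorem \ref{DisturbanceNormalOperationTheorem} gives $V_{k+1} \leq \gamma V_k \leq \gamma M_k$. Because $\gamma \in [0,1)$ (the meaningful regime noted in the discussion following Theorem \ref{DisturbanceNormalOperationTheorem}), $\gamma M_k = \max(\gamma, \gamma^{(k+1)-T_a}\gamma_a^{T_a}) \leq \max(1, \gamma^{(k+1)-T_a}\gamma_a^{T_a}) = M_{k+1}$. Under attack, $T_a$ increases by one. When $V_k \geq 1$, Theorem \ref{DisturbanceAttackTheorem} applies directly to give $V_{k+1} \leq \gamma_a V_k \leq \gamma_a M_k$, and re-indexing $k-T_a = (k+1)-(T_a+1)$ shows that $\gamma_a M_k$ is precisely the term inside $M_{k+1}$.

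The delicate case is an attack step with $V_k < 1$, since Theorem \ref{DisturbanceAttackTheorem}'s conclusion is predicated on $V_k \geq 1$. Here I would reach beneath the S-procedure used to prove that theorem and exploit the LMI directly: expanding $\xi_k^T \Gamma \xi_k \geq 0$ from \eqref{QuadraticBoundednessLMI}, together with $\Delta z_{k+1}\in\mathcal{Z}$ and $u_k^a\in\mathcal{U}^a(\bar{x}_k)$, yields the \emph{unconditional} bound $V_{k+1} \leq (\gamma_a - 2\alpha_1)V_k + 2\alpha_1$. Restricted to $V_k \in [0,1]$, the right-hand side is dominated by $\max(\gamma_a, 2\alpha_1)$, which for any $\alpha_1$ satisfying the LMI is absorbed into the outer $\max(1,\cdot)$ in $M_{k+1}$. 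Stringing the branches together closes the induction.

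The main obstacle is precisely this interior case: Theorem \ref{DisturbanceAttackTheorem} is tailored to trajectories already outside the unit sublevel set $\{V \leq 1\}$, so handling trajectories that momentarily dip inside requires extracting extra mileage from \eqref{QuadraticBoundednessLMI} rather than merely invoking its S-procedure consequence. Everything else in the proof reduces to bookkeeping the exponents in $\gamma^{k-T_a}\gamma_a^{T_a}$ as attack and normal steps are accumulated, with the outer $\max(1,\cdot)$ acting as a floor that absorbs any slack introduced at the unit-ball boundary.
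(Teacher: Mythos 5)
Your base case, your normal-operation branch, and your attack branch with $V_k\geq 1$ are fine, and they follow the same route as the paper (chaining Theorems \ref{DisturbanceNormalOperationTheorem} and \ref{DisturbanceAttackTheorem}). Your extraction of the unconditional bound $V_{k+1}\leq(\gamma_a-2\alpha_1)V_k+2\alpha_1$ from \eqref{QuadraticBoundednessLMI} together with $\Delta z_{k+1}\in\mathcal{Z}$ and $u_k^a\in\mathcal{U}^a(\bar{x}_k)$ is also correct --- it is exactly the quadratic form that the S-procedure certifies, specialized to the two constraint sets --- and one can even show $2\alpha_1\leq\gamma_a$ by testing the LMI on vectors $\bigl((v,v),0,0,0\bigr)$, which lie in the kernel of $\bar{L}$. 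The gap is the final absorption claim in the interior-attack branch. There you need $\max(\gamma_a,2\alpha_1)\leq M_{k+1}=\max\bigl(1,\gamma^{k-T_a}\gamma_a^{T_a+1}\bigr)$, and nothing supplies this: when $\gamma_a>1$ (the regime of interest) and the system has been under normal operation long enough that $\gamma^{k-T_a}\gamma_a^{T_a}<1$, the claimed bound $M_{k+1}$ can equal $1$, while your branch only yields $V_{k+1}\leq\gamma_a>1$. The LMI does not force $\gamma_a\leq1$ or $2\alpha_1\leq\tfrac12$, so the outer $\max(1,\cdot)$ is a floor on the claimed bound, not a ceiling that absorbs your estimate. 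The induction therefore does not close in precisely the case you flagged as delicate.

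For comparison, the paper's own proof is the same two-branch chaining followed by ``consequently'' --- it never addresses an attack step launched from inside the unit sublevel set, implicitly treating $\{V\leq1\}$ as if it were invariant under attack, which Theorem \ref{DisturbanceAttackTheorem} does not assert for $\gamma_a\geq1$. So your diagnosis of where the difficulty sits is sharper than the paper's argument, but your patch does not repair it. What your affine bound actually supports is a weaker invariant: since it gives $V_{k+1}\leq\gamma_a\max(1,V_k)$ on attack steps and $\max(1,V_{k+1})\leq\max(1,V_k)$ on normal steps (for $\gamma\in[0,1)$), one obtains $V_k\leq\max(1,\gamma_a^{T_a})$, which forfeits the $\gamma^{k-T_a}$ credit for normal operation; recovering the stated bound $\max(1,\gamma^{k-T_a}\gamma_a^{T_a})$ requires either an additional hypothesis (e.g., $\gamma_a\leq1$, or that attacks occur only while $\gamma^{k-T_a}\gamma_a^{T_a}\geq1$) or a reformulation in which the exponent bookkeeping resets whenever the trajectory re-enters the unit ellipsoid. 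As written, the absorption step is a genuine gap.
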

\begin{proof}
If $\gamma\mathcal{P}_1-\mathcal{A}^T\mathcal{P}_1\mathcal{A}\succeq0$ and if $\exists\alpha_1\geq0$ such that \eqref{QuadraticBoundednessLMI} is satisfied, then it follows directly from Theorems \ref{DisturbanceNormalOperationTheorem} and \ref{DisturbanceAttackTheorem} that
\begin{equation*}
\bar{x}_{k+1}^{1^T}\mathcal{P}_1\bar{x}_{k+1}^1 \leq
\begin{cases}
\gamma\bar{x}_k^{1^T}\mathcal{P}_1\bar{x}_k^1 & \text{normal operation} \\
\gamma_a\bar{x}_k^{1^T}\mathcal{P}_1\bar{x}_k^1 & \text{attack when } \bar{x}_k^{1^T}\mathcal{P}_1\bar{x}_k^1\geq1
\end{cases}
\end{equation*}
for the system in \eqref{DisturbanceDynamics}. Consequently,
\begin{equation*}
\bar{x}_k^{1^T}\mathcal{P}_1\bar{x}_k^1 \leq \max\left(1,\gamma^{k-T_a}\gamma_a^{T_a}\bar{x}_0^{1^T}\mathcal{P}_1\bar{x}_0^1\right).
\end{equation*}
If $\bar{x}_0^1\in\mathcal{E}_1(0,0)$, then $\max_{\bar{x}_0^1}\bar{x}_0^{1^T}\mathcal{P}_1\bar{x}_0^1=1$, implying that
\begin{equation*}
\bar{x}_k^{1^T}\mathcal{P}_1\bar{x}_k^1 \leq \max\left(1,\gamma^{k-T_a}\gamma_a^{T_a}\right).
\end{equation*}
\end{proof}
Theorem \ref{DisturbanceSetTheorem} shows that when $\gamma\in[0,1)$ and $\gamma_a\geq1$, the ellipsoid $\mathcal{E}_1(k,T_a)$ grows by a factor of $\gamma_a$ each time step the system is under attack and shrinks by a factor of $\gamma$ each time step the system is not under attack. Corollary \ref{VolumeCorollary} describes the volume of this ellipsoid with respect to $\gamma$, $\gamma_a$, $k$, and $T_a$.
\begin{corollary}
\label{VolumeCorollary}
When $\gamma^{k-T_a}\gamma_a^{T_a}>1$, the volume of $\mathcal{E}_1(k,T_a)$
\begin{itemize}
\item grows logarithmically with $\gamma$ and $\gamma_a$.
\item is inversely proportional to $k$ if $\gamma\in(0,1)$.
\item is proportional to $T_a$ if $\gamma_a>\gamma$.
\end{itemize}
When $\gamma^{k-T_a}\gamma_a^{T_a}\leq1$, the volume of $\mathcal{E}_1(k,T_a)$ is not dependent on $\gamma$, $\gamma_a$, $k$, or $T_a$.
\end{corollary}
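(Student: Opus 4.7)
The plan is to apply the standard volume formula for an ellipsoid and then read the three claims off the resulting expression by taking logarithms. Since $\bar{x}_k^1\in\mathbb{R}^{2n}$ and $\mathcal{P}_1\succ0$, the ellipsoid $\{\bar{x}_k^1 : \bar{x}_k^{1^T}\mathcal{P}_1\bar{x}_k^1\leq c\}$ has volume $\frac{\pi^n}{\Gamma(n+1)\sqrt{\det\mathcal{P}_1}}\,c^n$, so with $c=\max(1,\gamma^{k-T_a}\gamma_a^{T_a})$ taken from \eqref{E1Definition} the entire corollary reduces to analyzing the dependence of $c^n$ on the four parameters.

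For the first regime, $\gamma^{k-T_a}\gamma_a^{T_a}>1$, I would take $c=\gamma^{k-T_a}\gamma_a^{T_a}$ and pass to the log-volume,
\[
\log\mathrm{Vol}(\mathcal{E}_1(k,T_a)) = \mathrm{const} + n(k-T_a)\log\gamma + nT_a\log\gamma_a,
\]
from which each bulleted claim is immediate. First, $\log\mathrm{Vol}$ is affine in $\log\gamma$ and $\log\gamma_a$, which is the precise content of ``grows logarithmically''. Second, with $\gamma\in(0,1)$ the coefficient of $k$ in $\log\mathrm{Vol}$ is $n\log\gamma<0$, so $\log\mathrm{Vol}$ is a linearly decreasing function of $k$, matching the ``inversely proportional'' claim. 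Third, regrouping as $\mathrm{const} + nk\log\gamma + nT_a(\log\gamma_a-\log\gamma)$ shows that when $\gamma_a>\gamma$ the coefficient of $T_a$ is strictly positive, so $\log\mathrm{Vol}$ grows linearly in $T_a$.

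For the second regime, $\gamma^{k-T_a}\gamma_a^{T_a}\leq1$, the maximum in \eqref{E1Definition} evaluates to $1$, so $c=1$ and $\mathrm{Vol}$ collapses to the constant $\pi^n/(\Gamma(n+1)\sqrt{\det\mathcal{P}_1})$, independent of $\gamma$, $\gamma_a$, $k$, and $T_a$. The only real subtlety is interpreting the informal language (``logarithmically'', ``inversely proportional'', ``proportional'') as referring to the log-linear dependencies isolated above rather than to literal reciprocal or logarithmic scaling; with that convention the remainder is routine and I do not expect any genuine obstacle.
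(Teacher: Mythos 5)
Your proposal is correct and follows essentially the same route as the paper: the paper works with the standard $-\log\det$ volume surrogate for the ellipsoid (equivalently, the log-volume up to affine rescaling), expands $\max(1,\gamma^{k-T_a}\gamma_a^{T_a})$ by cases, and reads the three bulleted dependencies and the constancy in the second regime off the resulting expression $2n\left((k-T_a)\log\gamma+T_a\log\gamma_a\right)-\log\det\mathcal{P}_1$, exactly as you do with $\log\mathrm{Vol}$. Your explicit use of the exact ellipsoid volume formula and your reading of the informal terms (``logarithmically,'' ``inversely proportional,'' ``proportional'') as log-linear dependencies match the paper's intent, so there is no substantive difference.
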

\begin{proof}
The volume of $\mathcal{E}_1(k,T_a)$ is proportional to
\begin{equation*}
\begin{split}
&-\log\det\left(\frac{1}{\max(1,\gamma^{k-T_a}\gamma_a^{T_a})}\mathcal{P}_1\right) \\
&= -\log\left(\frac{1}{\max(1,\gamma^{k-T_a}\gamma_a^{T_a})^{2n}}\det(\mathcal{P}_1)\right) \\
&= 2n\log\left(\max(1,\gamma^{k-T_a}\gamma_a^{T_a})\right) - \log\det(\mathcal{P}_1) \\
&= \begin{cases}
2n\left((k-T_a)\log(\gamma)+T_a\log(\gamma_a)\right) - \log\det(\mathcal{P}_1) \\
\hspace{5.5cm}\text{if } \gamma^{k-T_a}\gamma_a^{T_a} > 1 \\
-\log\det(\mathcal{P}_1) \hspace{3.5cm}\text{if } \gamma^{k-T_a}\gamma_a^{T_a} \leq 1
\end{cases} \\
&= \begin{cases}
2n\left(k\log(\gamma)+T_a\log\left(\frac{\gamma_a}{\gamma}\right)\right) - \log\det(\mathcal{P}_1) \\
\hspace{5.5cm}\text{if } \gamma^{k-T_a}\gamma_a^{T_a} > 1 \\
-\log\det(\mathcal{P}_1) \hspace{3.5cm}\text{if } \gamma^{k-T_a}\gamma_a^{T_a} \leq 1.
\end{cases}
\end{split}
\end{equation*}
\end{proof}

\subsection{Stochastic Uncertainties}
Having quantified the size of the ellipsoid in which $\bar{x}_k^1$ lies, we now quantify the ellipsoid in which $\bar{x}_k^2$ lies with at least probability $p$ for the stochastic uncertainties system in \eqref{StochasticDynamics}. To do so, we leverage the notions of quadratic boundedness, robust positive invariance, and probabilistic positive invariance which are described in Definitions \ref{QuadraticBoundedness}, \ref{RobustPositiveInvariance}, and \ref{ProbabilisticPositiveInvariance}. These definitions, along with Lemmas \ref{RobustInvarianceTheorem} and \ref{RIStoPIS}, have been modified and adapted from \cite{alessandri2004estimation,hewing2018correspondence} in order to arrive at the result presented in Theorem \ref{ProbabilisticInvarianceTheorem}.
\begin{definition}[\cite{alessandri2004estimation}]
\label{QuadraticBoundedness}
Let $s_k\in\mathbb{R}^{n_s}$ and $d_k\in\mathbb{R}^{n_d}$ represent state and disturbance vectors, respectively, and let $D$ be a compact set. A system of the form
\begin{equation}
\label{QuadraticBoundSystem}
s_{k+1} = \bar{\mathcal{A}}s_k + \bar{\mathcal{B}}d_k
\end{equation}
is quadratically bounded with symmetric positive definite Lyapunov matrix $\mathcal{P}$ if and only if
\begin{equation}
s_k^T\mathcal{P}s_k\geq1\implies s_{k+1}^T\mathcal{P}s_{k+1}\leq s_k^T\mathcal{P}s_k ~ \forall d_k\in D.
\end{equation}
\end{definition}
\begin{definition}[\cite{alessandri2004estimation}]
\label{RobustPositiveInvariance}
The set $\mathcal{S}$ is a robustly positively invariant set for \eqref{QuadraticBoundSystem} if and only if $s_k\in \mathcal{S}$ implies that $s_{k+1}\in\mathcal{S}$ $\forall d_k\in D$.
\end{definition}
\begin{definition}[\cite{hewing2018correspondence}]
\label{ProbabilisticPositiveInvariance}
Let $d_k\sim\mathcal{Q}^d$ indicate that $d_k$ is a random variable of distribution $\mathcal{Q}^d$. The set $\mathcal{S}$ is a probabilistic positively invariant set of probability level $p$ for \eqref{QuadraticBoundSystem} with $d_k\sim\mathcal{Q}^d$ if and only if $s_0\in\mathcal{S}$ implies that Pr$(s_k\in\mathcal{S})\geq p$ $\forall k\geq0$.
\end{definition}

Given these definitions, Lemma \ref{RobustInvarianceTheorem} provides a condition that is equivalent to robust positive invariance, Lemma \ref{RIStoPIS} shows the correspondence between robust positive invariance and probabilistic positive invariance, and Lemma \ref{StochasticLMI} combines these results to provide a sufficient condition for probabilistic positive invariance.
\begin{lemma}[\cite{alessandri2004estimation}]
\label{RobustInvarianceTheorem}
Let $d_k\in D=\left\{d_k\middle|d_k^T\mathcal{D}^{-1}d_k\leq1,~\mathcal{D}\succ0\right\}$. $\mathcal{S}\triangleq\left\{s_k\middle|s_k^T\mathcal{P}s_k\leq1\right\}$ is a robustly positively invariant set for \eqref{QuadraticBoundSystem} and the system in \eqref{QuadraticBoundSystem} is quadratically bounded with symmetric positive definite Lyapunov matrix $\mathcal{P}$ if and only if $\exists\alpha\geq0$ such that
\begin{equation}
\begin{bmatrix}
(\alpha-1)\mathcal{P}+\bar{\mathcal{A}}^T\mathcal{P}\bar{\mathcal{A}} & \bar{\mathcal{A}}^T\mathcal{P}\bar{\mathcal{B}} \\
\bar{\mathcal{B}}^T\mathcal{P}\bar{\mathcal{A}} & \bar{\mathcal{B}}^T\mathcal{P}\bar{\mathcal{B}}-\alpha\mathcal{D}^{-1}
\end{bmatrix} \preceq 0.
\end{equation}
\end{lemma}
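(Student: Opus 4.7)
The plan is to translate the LMI into a single quadratic inequality on $\xi_k \triangleq [s_k^T\ d_k^T]^T$ via the S-procedure, as was done in the proof of Theorem \ref{DisturbanceAttackTheorem}, and then relate that inequality to RPI and QB separately. After substituting $s_{k+1} = \bar{\mathcal{A}}s_k + \bar{\mathcal{B}}d_k$, the LMI is equivalent to $s_{k+1}^T \mathcal{P} s_{k+1} \leq (1-\alpha) s_k^T \mathcal{P} s_k + \alpha d_k^T \mathcal{D}^{-1} d_k$ holding for every $s_k$ and $d_k$.

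For the $(\Leftarrow)$ direction, I would first observe that setting $d_k = 0$ in the LMI gives $(\alpha-1)\mathcal{P} + \bar{\mathcal{A}}^T \mathcal{P} \bar{\mathcal{A}} \preceq 0$, which, combined with $\bar{\mathcal{A}}^T \mathcal{P} \bar{\mathcal{A}} \succeq 0$ and $\mathcal{P} \succ 0$, forces $\alpha \leq 1$. A case split on $s_k^T \mathcal{P} s_k$ then finishes the argument: if $s_k^T \mathcal{P} s_k \geq 1$ and $d_k \in D$, the bound is $\leq s_k^T \mathcal{P} s_k + \alpha(1 - s_k^T \mathcal{P} s_k) \leq s_k^T \mathcal{P} s_k$, which is quadratic boundedness; if $s_k^T \mathcal{P} s_k \leq 1$ and $d_k \in D$, the bound is $\leq (1-\alpha) + \alpha = 1$, which is robust positive invariance.

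For the $(\Rightarrow)$ direction, I would combine QB and RPI into the single quadratic implication $s_k^T \mathcal{P} s_k \geq d_k^T \mathcal{D}^{-1} d_k \Rightarrow s_{k+1}^T \mathcal{P} s_{k+1} \leq s_k^T \mathcal{P} s_k$. Both sides of this implication are homogeneous of degree two in $(s_k, d_k)$, so the case $s_k \neq 0$ reduces under the scaling $(s_k, d_k) \mapsto (s_k/\sqrt{s_k^T \mathcal{P} s_k},\, d_k/\sqrt{s_k^T \mathcal{P} s_k})$ to the boundary case $\tilde{s}_k^T \mathcal{P} \tilde{s}_k = 1$ with $\tilde{d}_k \in D$, on which QB applies directly; the degenerate case $s_k = 0$ forces $d_k = 0$ under the hypothesis and is trivial. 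Applying the lossless S-procedure to this quadratic implication then produces an $\alpha \geq 0$ for which $s_{k+1}^T \mathcal{P} s_{k+1} - s_k^T \mathcal{P} s_k + \alpha(s_k^T \mathcal{P} s_k - d_k^T \mathcal{D}^{-1} d_k) \leq 0$ holds for all $\xi_k$, which is precisely the LMI.

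The main obstacle is the $(\Rightarrow)$ direction, and specifically the homogeneity argument that lifts QB (originally stated only where $s_k^T \mathcal{P} s_k \geq 1$) to the full cone $\{s_k^T \mathcal{P} s_k \geq d_k^T \mathcal{D}^{-1} d_k\}$ required by the S-procedure. A secondary subtlety is verifying the Slater-type constraint qualification needed for the lossless S-procedure; this is immediate here because choosing any $s_k \neq 0$ with $d_k = 0$ gives $d_k^T \mathcal{D}^{-1} d_k - s_k^T \mathcal{P} s_k < 0$, providing strict feasibility of the hypothesis.
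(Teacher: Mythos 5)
Your proof is correct. Note first that the paper itself offers no proof of this lemma---it is imported verbatim from \cite{alessandri2004estimation}---so the comparison here is against the cited source rather than anything in the text; your argument is essentially the standard S-procedure route used there and is in the same spirit as the paper's own S-procedure computation in the proof of Theorem \ref{DisturbanceAttackTheorem}. The key steps all check out: the LMI is indeed equivalent to $s_{k+1}^T\mathcal{P}s_{k+1}\leq(1-\alpha)s_k^T\mathcal{P}s_k+\alpha d_k^T\mathcal{D}^{-1}d_k$ for all $(s_k,d_k)$; your observation that the $(1,1)$ block forces $\alpha\leq1$ is exactly what makes the invariance half of the $(\Leftarrow)$ case split work (the quadratic-boundedness half needs only $\alpha\geq0$); the homogenization in the $(\Rightarrow)$ direction correctly lifts the QB implication from the level set $s_k^T\mathcal{P}s_k\geq1$ to the cone $s_k^T\mathcal{P}s_k\geq d_k^T\mathcal{D}^{-1}d_k$ (with the degenerate case $s_k=0$, $d_k=0$ handled), and the Slater point $(s_k\neq0,\,d_k=0)$ justifies losslessness of the S-lemma for the two homogeneous quadratic forms. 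One remark: your $(\Rightarrow)$ direction invokes only quadratic boundedness, never robust positive invariance; this is not a flaw---a weaker hypothesis suffices---and it incidentally re-derives the known fact that QB with Lyapunov matrix $\mathcal{P}$ already implies that the unit ellipsoid of $\mathcal{P}$ is robustly positively invariant, consistent with the cited literature.
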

\begin{lemma}[\cite{hewing2018correspondence}]
\label{RIStoPIS}
If $\mathcal{S}$ is a robustly positively invariant set for \eqref{QuadraticBoundSystem} with
\begin{equation}
d_k\in D=\left\{d_k\middle|d_k^T\mathcal{D}^{-1}d_k\leq n_d/(1-p)\right\},
\end{equation}
then $\mathcal{S}$ is also a probabilistic positively invariant set of probability level $p$ for \eqref{QuadraticBoundSystem} with $d_k\sim\mathcal{Q}^d$, $\mathbb{E}[d_k]=0$, and Cov$(d_k)=\mathcal{D}$.
\end{lemma}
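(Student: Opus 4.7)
The plan is to prove Lemma~\ref{RIStoPIS} by combining a one-step probabilistic bound on the stochastic disturbance with the deterministic propagation guaranteed by robust positive invariance. At a high level, I would first show that $d_k$, despite being unbounded in general, falls within the scaled ellipsoid $D$ with probability at least $p$ at every time step, and then use RPI to translate this disturbance containment into the desired state containment.

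For the disturbance bound, I would apply Markov's inequality to the nonnegative scalar $d_k^T\mathcal{D}^{-1}d_k$. Using the cyclic trace identity together with $\mathbb{E}[d_k]=0$ and $\mathrm{Cov}(d_k)=\mathcal{D}$,
\begin{equation*}
\mathbb{E}\!\left[d_k^T\mathcal{D}^{-1}d_k\right] = \mathbb{E}\!\left[\operatorname{tr}\!\left(\mathcal{D}^{-1}d_kd_k^T\right)\right] = \operatorname{tr}\!\left(\mathcal{D}^{-1}\mathcal{D}\right) = n_d.
\end{equation*}
Markov's inequality then gives $\mathrm{Pr}\!\left(d_k^T\mathcal{D}^{-1}d_k > n_d/(1-p)\right)\leq 1-p$, and hence $\mathrm{Pr}(d_k\in D)\geq p$ at each $k$.

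For the state-containment step, I would invoke the RPI hypothesis sample-pathwise: because $\mathcal{S}$ is robustly positively invariant for \eqref{QuadraticBoundSystem} with admissible disturbance set $D$, the implication $s_k\in\mathcal{S}$ and $d_k\in D \implies s_{k+1}\in\mathcal{S}$ holds for every realization. Combined with $s_0\in\mathcal{S}$ and taking the contrapositive, this couples the exit event at time $k+1$ to the exit event at time $k$ together with the single Markov-controlled disturbance event at time $k$, so the one-step disturbance bound plugs directly into a bound on the state-exit probability.

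The main obstacle is the temporal accounting. The lemma asserts the marginal bound $\mathrm{Pr}(s_k\in\mathcal{S})\geq p$ at every $k$, whereas a naive chaining of the one-step disturbance events $\{d_0\in D\},\ldots,\{d_{k-1}\in D\}$ only produces the strictly weaker $p^k$ under independence, or $1-k(1-p)$ via a union bound. Recovering the clean per-step marginal bound is precisely the content of the correspondence result in \cite{hewing2018correspondence}, which exploits the sublevel-set geometry of $\mathcal{S}$ rather than treating successive exit events as independent in time; this refined accounting — rather than either of the Markov or RPI ingredients on their own — is the step I would reproduce most carefully when writing out the full proof.
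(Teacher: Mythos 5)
Your two ingredients are fine as far as they go: $\mathbb{E}[d_k^T\mathcal{D}^{-1}d_k]=\operatorname{tr}(\mathcal{D}^{-1}\mathcal{D})=n_d$ plus Markov's inequality indeed gives $\Pr(d_k\in D)\geq p$ at each step, and robust positive invariance can be invoked realization by realization. But the proposal stops exactly where the lemma lives. You correctly note that chaining the per-step events only yields $p^k$ or $1-k(1-p)$, and you then defer the decisive step to \cite{hewing2018correspondence}, describing it as a ``refined accounting'' that ``exploits the sublevel-set geometry of $\mathcal{S}$.'' That deferral leaves the marginal bound $\Pr(s_k\in\mathcal{S})\geq p$ unproved for $k\geq2$, and it cannot be repaired from the surrounding text: the paper itself states this lemma by citation and gives no proof, so the step you omit is the entire content to be supplied. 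Moreover, your guess at the mechanism is off --- the correspondence result holds for an arbitrary robustly positively invariant set $\mathcal{S}$ and does not rest on treating $\mathcal{S}$ through its sublevel sets or on a sharper analysis of successive exit events.

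The missing idea is aggregation via linearity, applied once per time index rather than once per step. Write $s_k=\bar{\mathcal{A}}^{k}s_0+\phi_k$ with $\phi_k\triangleq\sum_{j=0}^{k-1}\bar{\mathcal{A}}^{j}\bar{\mathcal{B}}d_{k-1-j}$, whose covariance is $\Sigma_k=\sum_{j=0}^{k-1}\bar{\mathcal{A}}^{j}\bar{\mathcal{B}}\mathcal{D}\bar{\mathcal{B}}^{T}(\bar{\mathcal{A}}^{j})^{T}$. For each fixed $k$, apply the Chebyshev/Markov bound a single time to $\phi_k$ to obtain an ellipsoidal confidence region of level $p$; since only one probabilistic inequality is used per $k$, no factor of $(1-p)$ accumulates over the horizon. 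Then use the ellipsoid-sum containment $\{x:x^{T}(P_1+P_2)^{-1}x\leq1\}\subseteq\{x:x^{T}P_1^{-1}x\leq1\}\oplus\{x:x^{T}P_2^{-1}x\leq1\}$ (split $x=P_1(P_1+P_2)^{-1}x+P_2(P_1+P_2)^{-1}x$) to place that confidence region inside $\bigoplus_{j=0}^{k-1}\bar{\mathcal{A}}^{j}\bar{\mathcal{B}}D$, and iterate the RPI hypothesis to get $\bar{\mathcal{A}}^{k}s_0\oplus\bigoplus_{j=0}^{k-1}\bar{\mathcal{A}}^{j}\bar{\mathcal{B}}D\subseteq\mathcal{S}$ for every $s_0\in\mathcal{S}$. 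Combining the three steps yields $\Pr(s_k\in\mathcal{S})\geq p$ for all $k$, which is the statement. Note that the constant $n_d/(1-p)$ in the definition of $D$ is consumed precisely in this containment step (the Markov bound on $\phi_k$ scales with the rank of $\Sigma_k$, which has to be checked against $n_d$ when the disturbance enters through $\bar{\mathcal{B}}$ into a higher-dimensional state space), whereas your one-step bound never interacts with that constant beyond $k=1$. Without reproducing this aggregation argument, or an equivalent, the proposal establishes the disturbance model's concentration property but not the lemma's conclusion.
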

\begin{lemma}
\label{StochasticLMI}
Let $d_k\sim\mathcal{Q}^d$, $\mathbb{E}[d_k]=0$, and Cov$(d_k)=\mathcal{D}$. If $\exists\alpha\geq0$, $\mathcal{P}\succ0$ such that
\begin{equation}
\begin{bmatrix}
(\alpha-1)\mathcal{P}+\bar{\mathcal{A}}^T\mathcal{P}\bar{\mathcal{A}} & \bar{\mathcal{A}}^T\mathcal{P}\bar{\mathcal{B}} \\
\bar{\mathcal{B}}^T\mathcal{P}\bar{\mathcal{A}} & \bar{\mathcal{B}}^T\mathcal{P}\bar{\mathcal{B}}-\frac{\alpha(1-p)}{n_d}\mathcal{D}^{-1}
\end{bmatrix} \preceq 0,
\end{equation}
then $\mathcal{S}\triangleq\left\{s_k\middle|s_k^T\mathcal{P}s_k\leq1\right\}$ is a probabilistic positively invariant set of probability level $p$ for \eqref{QuadraticBoundSystem}.
\end{lemma}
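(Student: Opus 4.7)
The plan is to prove Lemma \ref{StochasticLMI} simply by composing the two preceding lemmas, with one small rescaling step that rewrites the scaled covariance so that Lemma \ref{RobustInvarianceTheorem} is directly applicable. The key observation is that the matrix $\frac{\alpha(1-p)}{n_d}\mathcal{D}^{-1}$ appearing in the hypothesis is exactly $\alpha\tilde{\mathcal{D}}^{-1}$ for the scaled ``shape matrix'' $\tilde{\mathcal{D}} \triangleq \frac{n_d}{1-p}\mathcal{D}$, so the given LMI is nothing but the robust-invariance LMI of Lemma \ref{RobustInvarianceTheorem} for the disturbance set defined in Lemma \ref{RIStoPIS}.

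First I would introduce the auxiliary disturbance set
\begin{equation*}
D \triangleq \left\{d_k \,\middle|\, d_k^T\mathcal{D}^{-1}d_k \leq \frac{n_d}{1-p}\right\} = \left\{d_k \,\middle|\, d_k^T\tilde{\mathcal{D}}^{-1}d_k \leq 1\right\},
\end{equation*}
and rewrite the hypothesized LMI in terms of $\tilde{\mathcal{D}}$, so that it coincides with the LMI
\begin{equation*}
\begin{bmatrix}
(\alpha-1)\mathcal{P}+\bar{\mathcal{A}}^T\mathcal{P}\bar{\mathcal{A}} & \bar{\mathcal{A}}^T\mathcal{P}\bar{\mathcal{B}} \\
\bar{\mathcal{B}}^T\mathcal{P}\bar{\mathcal{A}} & \bar{\mathcal{B}}^T\mathcal{P}\bar{\mathcal{B}} - \alpha\tilde{\mathcal{D}}^{-1}
\end{bmatrix} \preceq 0
\end{equation*}
required by Lemma \ref{RobustInvarianceTheorem}. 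Applying that lemma (with $\mathcal{D}$ replaced by $\tilde{\mathcal{D}}$) immediately yields that $\mathcal{S} = \{s_k \mid s_k^T\mathcal{P}s_k \leq 1\}$ is robustly positively invariant for the system \eqref{QuadraticBoundSystem} whenever $d_k\in D$.

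Next I would invoke Lemma \ref{RIStoPIS} directly. Because $\mathbb{E}[d_k]=0$ and $\mathrm{Cov}(d_k) = \mathcal{D}$, and because $\mathcal{S}$ is robustly positively invariant under the disturbance set $D$ defined above, Lemma \ref{RIStoPIS} guarantees that $\mathcal{S}$ is also a probabilistic positively invariant set of level $p$ for the stochastic system with $d_k\sim\mathcal{Q}^d$, which is precisely the conclusion of the lemma.

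I do not expect a major obstacle here; the entire argument is bookkeeping on the LMI so that the scaling by $n_d/(1-p)$ is absorbed into the disturbance shape matrix. The only care required is to note that the hypothesized positive semidefinite condition involves $\frac{\alpha(1-p)}{n_d}\mathcal{D}^{-1}$ rather than $\alpha\mathcal{D}^{-1}$, and to recognize that this is exactly the correction needed to land on the hypothesis of Lemma \ref{RIStoPIS} rather than Lemma \ref{RobustInvarianceTheorem} applied naively with $\mathcal{D}$ itself.
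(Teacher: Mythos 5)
Your proposal is correct and follows essentially the same route as the paper, which also derives this lemma directly by combining Lemma \ref{RobustInvarianceTheorem} with Lemma \ref{RIStoPIS}; your rescaling $\tilde{\mathcal{D}}=\frac{n_d}{1-p}\mathcal{D}$ is exactly the bookkeeping step left implicit there. No gaps.
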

\begin{proof}
Follows directly from Lemmas \ref{RobustInvarianceTheorem} and \ref{RIStoPIS}.
\end{proof}

Given these Lemmas, Theorem \ref{ProbabilisticInvarianceTheorem} sets forth the ellipsoid in which $\bar{x}_k^2$ lies with at least probability $p$ for the system in \eqref{StochasticDynamics}, and Corollary \ref{VolumeCorollary2} describes the volume of this ellipsoid with respect to $p$.
\begin{theorem}
\label{ProbabilisticInvarianceTheorem}
Let $\mathcal{R}\triangleq\text{BlkDiag}(Q,\Sigma)$ so that $\bar{w}_k\in\mathbb{R}^{n+m}\sim\mathcal{N}(0,\mathcal{R})$. If $\exists\alpha_2\geq0$, $\mathcal{P}_2\succ0$ such that
\begin{equation}
\label{ProbabilisticBoundednessLMI}
\begin{bmatrix}
(\alpha_2-1)\mathcal{P}_2+\mathcal{A}^T\mathcal{P}_2\mathcal{A} & \mathcal{A}^T\mathcal{P}_2\bar{\mathcal{K}} \\
\bar{\mathcal{K}}^T\mathcal{P}_2\mathcal{A} & \bar{\mathcal{K}}^T\mathcal{P}_2\bar{\mathcal{K}}-\frac{\alpha_2}{n+m}\mathcal{R}^{-1}
\end{bmatrix} \preceq 0,
\end{equation}
then
\begin{equation}
\mathcal{E}_2(p)\triangleq\left\{\bar{x}_k^2\middle|\bar{x}_k^{2^T}\mathcal{P}_2\bar{x}_k^2\leq1/(1-p)\right\}
\end{equation}
is a probabilistic positively invariant set of probability level $p$ for \eqref{StochasticDynamics}.
\end{theorem}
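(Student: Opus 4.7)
The plan is to derive this theorem as essentially a direct corollary of Lemma \ref{StochasticLMI}, after identifying the correct substitution of matrices and performing a scaling. First I would specialize Lemma \ref{StochasticLMI} to the system \eqref{StochasticDynamics} by setting $s_k = \bar{x}_k^2$, $\bar{\mathcal{A}} = \mathcal{A}$, $\bar{\mathcal{B}} = \bar{\mathcal{K}}$, $d_k = \bar{w}_k$, $n_d = n+m$, and $\mathcal{D} = \mathcal{R}$. The hypothesis of Lemma \ref{StochasticLMI} that $\mathbb{E}[d_k] = 0$ and $\mathrm{Cov}(d_k) = \mathcal{D}$ is immediate from $\bar{w}_k \sim \mathcal{N}(0,\mathcal{R})$ with $\mathcal{R} = \mathrm{BlkDiag}(Q,\Sigma)$, since $w_k$ and $z_{k+1}^n$ are independent zero-mean Gaussians with the specified covariances.

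The only real content beyond the instantiation is that Lemma \ref{StochasticLMI} produces invariance of sets of the form $\{s_k \mid s_k^T \mathcal{P} s_k \leq 1\}$, whereas the theorem states invariance of $\mathcal{E}_2(p) = \{\bar{x}_k^2 \mid \bar{x}_k^{2^T}\mathcal{P}_2 \bar{x}_k^2 \leq 1/(1-p)\}$. I would handle this by applying the lemma with the rescaled Lyapunov matrix $\mathcal{P} = (1-p)\mathcal{P}_2$, since then $\{s_k \mid s_k^T \mathcal{P} s_k \leq 1\}$ coincides exactly with $\mathcal{E}_2(p)$, and with the choice $\alpha = \alpha_2$.

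With these substitutions, the hypothesis of Lemma \ref{StochasticLMI} becomes
\begin{equation*}
\begin{bmatrix}
(\alpha_2-1)(1-p)\mathcal{P}_2 + (1-p)\mathcal{A}^T\mathcal{P}_2\mathcal{A} & (1-p)\mathcal{A}^T\mathcal{P}_2\bar{\mathcal{K}} \\
(1-p)\bar{\mathcal{K}}^T\mathcal{P}_2\mathcal{A} & (1-p)\bar{\mathcal{K}}^T\mathcal{P}_2\bar{\mathcal{K}} - \tfrac{\alpha_2(1-p)}{n+m}\mathcal{R}^{-1}
\end{bmatrix} \preceq 0,
\end{equation*}
which, after factoring out the positive scalar $1-p$, is equivalent to the LMI \eqref{ProbabilisticBoundednessLMI} assumed in the theorem. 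Thus the assumed LMI \eqref{ProbabilisticBoundednessLMI} is exactly the Lemma \ref{StochasticLMI} condition for the rescaled Lyapunov matrix, and the conclusion of Lemma \ref{StochasticLMI} delivers that $\mathcal{E}_2(p)$ is a probabilistic positively invariant set of probability level $p$ for \eqref{StochasticDynamics}.

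The main obstacle is really only bookkeeping: keeping track of how the factor $(1-p)$ distributes through every block of the LMI so that the term $\alpha(1-p)/n_d$ appearing in Lemma \ref{StochasticLMI} cleanly cancels to the term $\alpha_2/(n+m)$ appearing in \eqref{ProbabilisticBoundednessLMI}, and verifying that the resulting rescaled ellipsoid matches the definition of $\mathcal{E}_2(p)$. Once this substitution is laid out, no further argument is needed beyond citing Lemma \ref{StochasticLMI}.
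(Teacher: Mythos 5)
Your proposal is correct and follows essentially the same route as the paper: instantiate Lemma \ref{StochasticLMI} for \eqref{StochasticDynamics} with $\bar{\mathcal{A}}=\mathcal{A}$, $\bar{\mathcal{B}}=\bar{\mathcal{K}}$, $d_k=\bar{w}_k$, $n_d=n+m$, $\mathcal{D}=\mathcal{R}$, and then rescale the Lyapunov matrix via $\bar{\mathcal{P}}=(1-p)\mathcal{P}_2$ so that the unit-level set becomes $\mathcal{E}_2(p)$ and the factor $(1-p)$ cancels in the LMI. The paper's proof performs exactly this substitution, just stated more tersely.
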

\begin{proof}
Applying Lemma \ref{StochasticLMI} to \eqref{StochasticDynamics} yields that if $\exists\alpha_2\geq0$, $\bar{\mathcal{P}}\succ0$ such that
\begin{equation*}
\begin{bmatrix}
(\alpha_2-1)\bar{\mathcal{P}}+\mathcal{A}^T\bar{\mathcal{P}}\mathcal{A} & \mathcal{A}^T\bar{\mathcal{P}}\bar{\mathcal{K}} \\
\bar{\mathcal{K}}^T\bar{\mathcal{P}}\mathcal{A} & \bar{\mathcal{K}}^T\bar{\mathcal{P}}\bar{\mathcal{K}}-\frac{\alpha_2(1-p)}{n+m}\mathcal{R}^{-1}
\end{bmatrix} \preceq 0,
\end{equation*}
then $\mathcal{E}_2(p)=\left\{\bar{x}_k^2\middle|\bar{x}_k^{2^T}\bar{\mathcal{P}}\bar{x}_k^2\leq1\right\}$ is a probabilistic positively invariant set of probability level $p$ for \eqref{StochasticDynamics}. Letting $\bar{\mathcal{P}}\triangleq(1-p)\mathcal{P}_2$ yields the desired result.
\end{proof}
\begin{corollary}
\label{VolumeCorollary2}
The volume of $\mathcal{E}_2(p)$ grows logarithmically with $p$.
\end{corollary}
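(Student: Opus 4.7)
The plan is to copy the volume-counting template used in the proof of Corollary \ref{VolumeCorollary}. First I would rewrite
\[
\mathcal{E}_2(p)=\bigl\{\bar{x}_k^2\,\big|\,\bar{x}_k^{2^T}\bigl((1-p)\mathcal{P}_2\bigr)\bar{x}_k^2\leq 1\bigr\},
\]
absorbing the level-set value $1/(1-p)$ into the quadratic form. The volume of such a unit-level ellipsoid in $\mathbb{R}^{2n}$ (recall $\bar{x}_k^2\in\mathbb{R}^{2n}$, since $\bar{x}_k=[x_k^T,e_{k|k}^T]^T$) is, up to a constant depending only on $n$, inversely proportional to $\sqrt{\det((1-p)\mathcal{P}_2)}$, so it suffices to track
\[
-\log\det\bigl((1-p)\mathcal{P}_2\bigr)=-2n\log(1-p)-\log\det\mathcal{P}_2,
\]
exactly as was done in the proof of Corollary \ref{VolumeCorollary}.

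Second, I would argue that $\mathcal{P}_2$ itself can be chosen independent of $p$, so that $\log\det\mathcal{P}_2$ is a $p$-free constant. This falls out of the derivation of \eqref{ProbabilisticBoundednessLMI} inside Theorem \ref{ProbabilisticInvarianceTheorem}: the substitution $\bar{\mathcal{P}}=(1-p)\mathcal{P}_2$ applied to the LMI of Lemma \ref{StochasticLMI} cancels the $(1-p)$ factor on the disturbance block, leaving an LMI in $\mathcal{P}_2$ and $\alpha_2$ with no $p$-dependence whatsoever. Consequently the only $p$-dependence of $-\log\det((1-p)\mathcal{P}_2)$, and hence of $\log\operatorname{vol}\mathcal{E}_2(p)$, lies in the term $-2n\log(1-p)$, which grows logarithmically as $p\to 1$. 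This gives the corollary.

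The argument is essentially a one-line substitution rather than a hard theorem; the main thing to be careful about, and really the only place the proof could go wrong, is confirming that $\mathcal{P}_2$ genuinely can be taken $p$-independent, because otherwise an implicit $p$-dependence buried inside $\log\det\mathcal{P}_2$ could either amplify or partially cancel the $-\log(1-p)$ term and change the claimed rate.
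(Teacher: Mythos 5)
Your proposal is correct and follows essentially the same route as the paper: the paper's proof likewise computes $-\log\det\left((1-p)\mathcal{P}_2\right)=-2n\log(1-p)-\log\det(\mathcal{P}_2)$ and reads off the $-2n\log(1-p)$ growth. Your extra check that \eqref{ProbabilisticBoundednessLMI} is $p$-free (so $\mathcal{P}_2$ can indeed be chosen independently of $p$) is left implicit in the paper but is a valid and worthwhile clarification, not a different argument.
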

\begin{proof}
The volume of $\mathcal{E}_2(p)$ is proportional to
\begin{equation*}
\begin{split}
-\log\det\left((1-p)\mathcal{P}_2\right) &= -\log\left((1-p)^{2n}\det(\mathcal{P}_2)\right) \\
&= -2n\log(1-p)-\log\det(\mathcal{P}_2).
\end{split}
\end{equation*}
\end{proof}

\subsection{Safety Set}
We have now quantified the ellipsoid $\mathcal{E}_1(k,T_a)$ in which $\bar{x}_k^1$ is guaranteed to lie as well as the ellipsoid $\mathcal{E}_2(p)$ in which $\bar{x}_k^2$ lies with at least probability $p$. Theorem \ref{SafetyTheorem} uses these results to quantify the region within which the overall state $\bar{x}_k$ lies, and this region can be used to determine whether or not the safety constraints have been violated.
\begin{theorem}
\label{SafetyTheorem}
If $\bar{x}_0\in\{\mathcal{E}_1(0,0)\oplus\mathcal{E}_2(p)\}$ and if $\exists\alpha_1,\alpha_2,\gamma,\gamma_a\geq0$ and $\mathcal{P}_1,\mathcal{P}_2\succ0$ such that $\gamma\mathcal{P}_1-\mathcal{A}^T\mathcal{P}_1\mathcal{A}\succeq0$, \eqref{QuadraticBoundednessLMI}, and \eqref{ProbabilisticBoundednessLMI} are satisfied, then
\begin{equation}
\text{Pr}\left(\bar{x}_k\in\left\{\mathcal{E}_1(k,T_a)\oplus\mathcal{E}_2(p)\right\}\right)\geq p ~ \forall k\geq0
\end{equation}
when a $p_d$-stealthy adversary attacks the system for a total of $T_a$ time steps.
\end{theorem}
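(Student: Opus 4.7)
The plan is to exploit the linear decomposition $\bar{x}_k=\bar{x}_k^1+\bar{x}_k^2$ built into \eqref{DisturbanceDynamics}--\eqref{StochasticDynamics}, which cleanly splits the overall dynamics \eqref{OverallDynamics} into a deterministic piece driven by the bounded adversarial biases $\{\Delta z_{k+1},u_k^a\}$ and a stochastic piece driven by the zero-mean Gaussian $\bar{w}_k$. The hypothesis $\bar{x}_0\in\mathcal{E}_1(0,0)\oplus\mathcal{E}_2(p)$ is precisely the statement that there exist vectors $\bar{x}_0^1\in\mathcal{E}_1(0,0)$ and $\bar{x}_0^2\in\mathcal{E}_2(p)$ with $\bar{x}_0^1+\bar{x}_0^2=\bar{x}_0$, and I would use this pair as the initial conditions for the two subsystems.

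Next I would handle each subsystem with the machinery already developed. For the bounded-uncertainty part, the assumptions $\gamma\mathcal{P}_1-\mathcal{A}^T\mathcal{P}_1\mathcal{A}\succeq0$ and \eqref{QuadraticBoundednessLMI} together with $\bar{x}_0^1\in\mathcal{E}_1(0,0)$ let me invoke Theorem \ref{DisturbanceSetTheorem} directly, yielding $\bar{x}_k^1\in\mathcal{E}_1(k,T_a)$ \emph{deterministically} for every $k\geq0$ whenever the adversary has struck at $T_a$ of those time steps while remaining $p_d$-stealthy (so that $\Delta z_{k+1}\in\mathcal{Z}$ and $u_k^a\in\mathcal{U}^a(\bar{x}_k)$ at each attack step, as needed in Theorem \ref{DisturbanceAttackTheorem}). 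For the stochastic part, $\bar{w}_k\sim\mathcal{N}(0,\mathcal{R})$ has mean zero and covariance $\mathcal{R}=\mathrm{BlkDiag}(Q,\Sigma)$, so the LMI \eqref{ProbabilisticBoundednessLMI} is exactly the hypothesis of Theorem \ref{ProbabilisticInvarianceTheorem}; combined with $\bar{x}_0^2\in\mathcal{E}_2(p)$ it gives $\mathrm{Pr}(\bar{x}_k^2\in\mathcal{E}_2(p))\geq p$ for every $k\geq0$.

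To conclude, I would assemble the two inclusions via the Minkowski sum. Whenever $\bar{x}_k^1\in\mathcal{E}_1(k,T_a)$ and $\bar{x}_k^2\in\mathcal{E}_2(p)$ hold simultaneously, $\bar{x}_k=\bar{x}_k^1+\bar{x}_k^2\in\mathcal{E}_1(k,T_a)\oplus\mathcal{E}_2(p)$ by definition of $\oplus$. Since the first inclusion holds with probability one and the second with probability at least $p$, the joint event also holds with probability at least $p$, producing the desired bound $\mathrm{Pr}(\bar{x}_k\in\mathcal{E}_1(k,T_a)\oplus\mathcal{E}_2(p))\geq p$ for every $k\geq0$.

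The only subtlety I anticipate is quantifier hygiene rather than any hard calculation: Definition \ref{ProbabilisticPositiveInvariance} asserts the probability bound pointwise in $k$ rather than as a uniform bound over the whole trajectory, and the statement of Theorem \ref{SafetyTheorem} is correspondingly ``for each $k\geq0$''. Matching those quantifiers carefully (and noting that the decomposition $\bar{x}_k=\bar{x}_k^1+\bar{x}_k^2$ is preserved by linearity of the dynamics so the two per-$k$ inclusions really refer to the \emph{same} realization) is the main thing to get right; no union bound over $k$ is needed, and the bounded-part inclusion being deterministic means no independence assumption between the two subsystems is required either.
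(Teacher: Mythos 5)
Your proposal is correct and follows essentially the same route as the paper's proof: invoke Theorem \ref{DisturbanceSetTheorem} for the deterministic inclusion $\bar{x}_k^1\in\mathcal{E}_1(k,T_a)$, Theorem \ref{ProbabilisticInvarianceTheorem} for $\mathrm{Pr}(\bar{x}_k^2\in\mathcal{E}_2(p))\geq p$, and combine via the Minkowski sum using $\bar{x}_k=\bar{x}_k^1+\bar{x}_k^2$. Your added care about splitting the initial condition and matching the pointwise-in-$k$ quantifiers is sound and only makes explicit what the paper leaves implicit.
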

\begin{proof}
If $\bar{x}_0\in\{\mathcal{E}_1(0,0)\oplus\mathcal{E}_2(p)\}$ and if $\exists\alpha_1,\alpha_2,\gamma,\gamma_a\geq0$ and $\mathcal{P}_1,\mathcal{P}_2\succ0$ such that $\gamma\mathcal{P}_1-\mathcal{A}^T\mathcal{P}_1\mathcal{A}\succeq0$, \eqref{QuadraticBoundednessLMI}, and \eqref{ProbabilisticBoundednessLMI} are satisfied, then it follows directly from Theorems \ref{DisturbanceSetTheorem} and \ref{ProbabilisticInvarianceTheorem} that
\begin{equation*}
\bar{x}_k^1\in\mathcal{E}_1(k,T_a), \quad \text{Pr}\left(\bar{x}_k^2\in\mathcal{E}_2(p)\right)\geq p ~ \forall k\geq0.
\end{equation*}
Since $\bar{x}_k=\bar{x}_k^1+\bar{x}_k^2$, $\bar{x}_k\in\{\mathcal{E}_1(k,T_a)\oplus\mathcal{E}_2(p)\}$ with probability greater than or equal to $p$ $\forall k\geq0$.
\end{proof}
Given a set of safe state constraints, Theorem \ref{SafetyTheorem} can be used to provide a conservative estimate of how long the system can remain under $p_d$-stealthy attack before the safe state constraints are violated. It provides a tool to analyze the response mechanism being used, providing the probability of safety given the number of time steps the system has been vulnerable to attack.

In executing this analysis, the variables $\gamma$, $\gamma_a$, $\mathcal{P}_1$, and $\mathcal{P}_2$ are all parameters that can be chosen by the system operator as long as $\gamma\mathcal{P}_1-\mathcal{A}^T\mathcal{P}_1\mathcal{A}\succeq0$, \eqref{QuadraticBoundednessLMI}, and \eqref{ProbabilisticBoundednessLMI} are satisfied. The meanings of these parameters are presented as follows:
\begin{itemize}
\item $\gamma$: The rate at which the Lyapunov function decreases ($\gamma\in[0,1)$) or increases ($\gamma\geq1$) under normal operation for the system in \eqref{DisturbanceDynamics}.
\item $\gamma_a$: The rate at which the Lyapunov function decreases ($\gamma_a\in[0,1)$) or increases ($\gamma_a\geq1$) under attack for the system in \eqref{DisturbanceDynamics}.
\item $\mathcal{P}_1$: The shape matrix for $\mathcal{E}_1(k,T_a)$.
\item $\mathcal{P}_2$: The shape matrix for $\mathcal{E}_2(p)$.
\end{itemize}
To minimize the rate parameters $\gamma$ and $\gamma_a$ as well as minimize the volume of the shape matrix for the ellipsoid $\mathcal{E}_1(k,T_a)$, the parameters $\gamma$, $\gamma_a$, and $\mathcal{P}_1$ can be designed according to the following optimization problem:
\begin{equation}
\label{Optimization1}
\begin{split}
&\argmin_{\alpha_1,\gamma,\gamma_a,\mathcal{P}_1}\omega_1\gamma+\omega_2\gamma_a-\omega_3\log\det{\mathcal{P}_1}\hspace{0.6cm}\text{s.t.}~\mathcal{P}_1\succ0, \\
&\hspace{0.3cm}\alpha_1,\gamma,\gamma_a\geq0,~\gamma\mathcal{P}_1-\mathcal{A}^T\mathcal{P}_1\mathcal{A}\succeq0,~\text{\eqref{QuadraticBoundednessLMI} is satisfied,}
\end{split}
\end{equation}
where $\omega_1$, $\omega_2$, and $\omega_3$ are nonnegative constants weighting the rate at which the Lyapunov function for \eqref{DisturbanceDynamics} decreases/increases under normal operation, the the rate at which the Lyapunov function for \eqref{DisturbanceDynamics} decreases/increases under attack, and the volume of the shape matrix for $\mathcal{E}_1(k,T_a)$, respectively. To minimize the volume of the ellipsoid $\mathcal{E}_2(p)$, the parameter $\mathcal{P}_2$ can be designed according to the following optimization problem:
\begin{equation}
\medmuskip=2.61mu
\thinmuskip=2.61mu
\thickmuskip=2.61mu
\label{Optimization2}
\argmax_{\alpha_2,\mathcal{P}_2}\log\det{\mathcal{P}_2}~\text{s.t.}~\mathcal{P}_2\succ0,~\alpha_2\geq0,~\text{\eqref{ProbabilisticBoundednessLMI} is satisfied.}
\end{equation}

\section{Example}
To demonstrate the effectiveness of this analysis for understanding resilience against stealthy attacks, we consider the quadruple tank process, a multivariate laboratory process that consists of four interconnected water tanks \cite{johansson2000quadruple}. This process is an example of a CPS that can be subject to stealthy attacks where an adversary corrupts the inputs and outputs through a memory corruption attack, a man-in-the-middle attack, or an attack on the physical actuators and sensors. The objective of the quadruple tank process is to control the water level of the first two tanks using two pumps. The system has four states (water level for each tank), two inputs (voltages applied to the pumps), and two outputs (voltages from level measurement devices for the first two tanks).

We use the system model in \cite{johansson2000quadruple} at the minimum-phase operating point with a $1.5$ cm$^2$ cross-section for each tank's outlet hole. We discretize the system with a sampling rate of $2$ seconds and use an LQG controller with weights $W=I$ and $V=100I$. To ensure an appropriate noise magnitude, $Q$ and $R$ are created by generating a matrix from a uniform distribution, multiplying it by its transpose, and dividing by $100$. The saturation limits for the actuators are given by $U=\text{BlkDiag}(1/\bar{v}_1^0,1/\bar{v}_2^0)^2$ so that $|\bar{v}_i-\bar{v}_i^0|\leq|\bar{v}_i^0|$, where $\bar{v}_i$ represents the voltage applied to pump $i$ and $\bar{v}_i^0$ represents the operating point of pump $i$. A window size of $T=10$ is used for the $\chi^2$ detector, and the threshold $\eta$ is chosen so that the desired false alarm rate is $1\%$. This system enters an unsafe region of operation when the tanks overflow, so the set of safe states is given by $|h_i-h_i^0|\leq30$ cm, where $h_i$ represents the water level of tank $i$ and $h_i^0$ represents the operating point of tank $i$.

We solve for the maximum noncentrality parameter $\bar{\lambda}$ according to \eqref{NonCentralityMaxSolution} for a $p_d$-stealthy adversary with $p_d=99\%$. Since the optimization problem in \eqref{Optimization1} is not convex, we design the parameters $\gamma$, $\gamma_a$, and $\mathcal{P}_1$ in a suboptimal manner by first finding a $\mathcal{P}_1$ that satisfies $\mathcal{P}_1-\mathcal{A}^T\mathcal{P}_1\mathcal{A}\succeq0$ and then finding the minimal nonnegative value for $\gamma$ such that $\gamma\mathcal{P}_1-\mathcal{A}^T\mathcal{P}_1\mathcal{A}\succeq0$ is satisfied. After that, we solve for $\alpha_1$ and $\gamma_a$ by minimizing $\gamma_a$ subject to \eqref{QuadraticBoundednessLMI}. Since minimizing $\mathcal{E}_2(p)$ jointly over $\alpha_2$ and $\mathcal{P}_2$ in \eqref{Optimization2} is not convex, we find a suboptimal solution for $\alpha_2$ and $\mathcal{P}_2$ by restricting the possible values of $\alpha_2$ to a finite set and maximizing $\max_{\mathcal{P}_2}\log\det{\mathcal{P}_2}$ subject to \eqref{ProbabilisticBoundednessLMI} over that finite set.

Figures \ref{fig:E1Volume1}, \ref{fig:E1Volume2}, and \ref{fig:E1Volume3} depict the volume of $\mathcal{E}_1(k,T_a)$, the ellipsoid in which $\bar{x}_k^1$ lies. As stated in Corollary \ref{VolumeCorollary}, Figure \ref{fig:E1Volume1} shows that the volume of $\mathcal{E}_1(k,T_a)$ is proportional to $T_a$, growing linearly with the total amount of time the system has been under $p_d$-stealthy attack.
\begin{figure}[h!]
\centering
\includegraphics[width=\columnwidth]{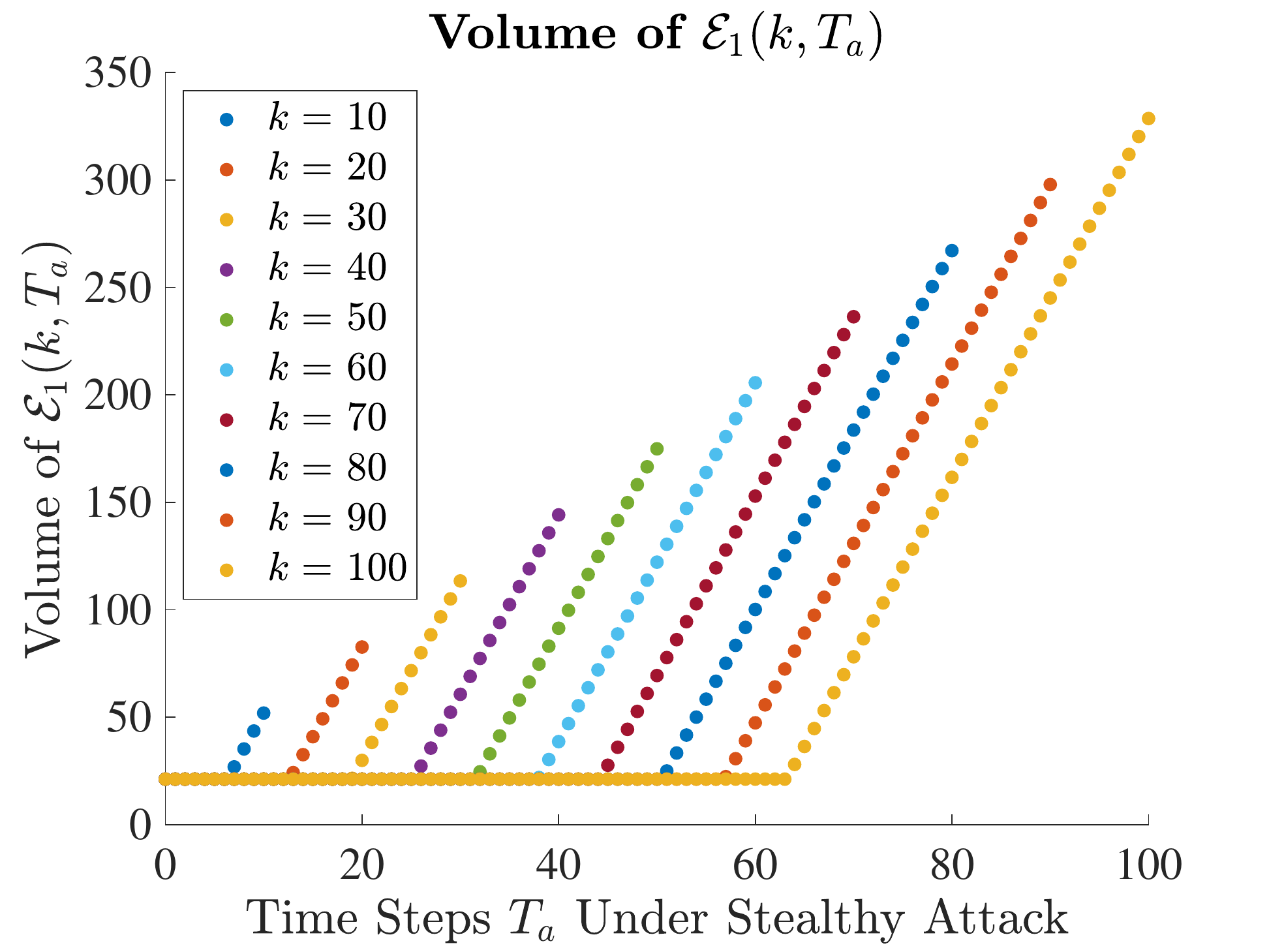}
\caption{Volume of $\mathcal{E}_1(k,T_a)$ as a function of the total number of time steps $T_a$ under $p_d$-stealthy attack.}
\label{fig:E1Volume1}
\end{figure}
Figure \ref{fig:E1Volume2} also confirms the results of Corollary \ref{VolumeCorollary}, showing that the volume of $\mathcal{E}_1(k,T_a)$ is inversely proportional to $k$, decreasing linearly with the total amount of time the system has been under normal operation.
\begin{figure}[h!]
\centering
\includegraphics[width=\columnwidth]{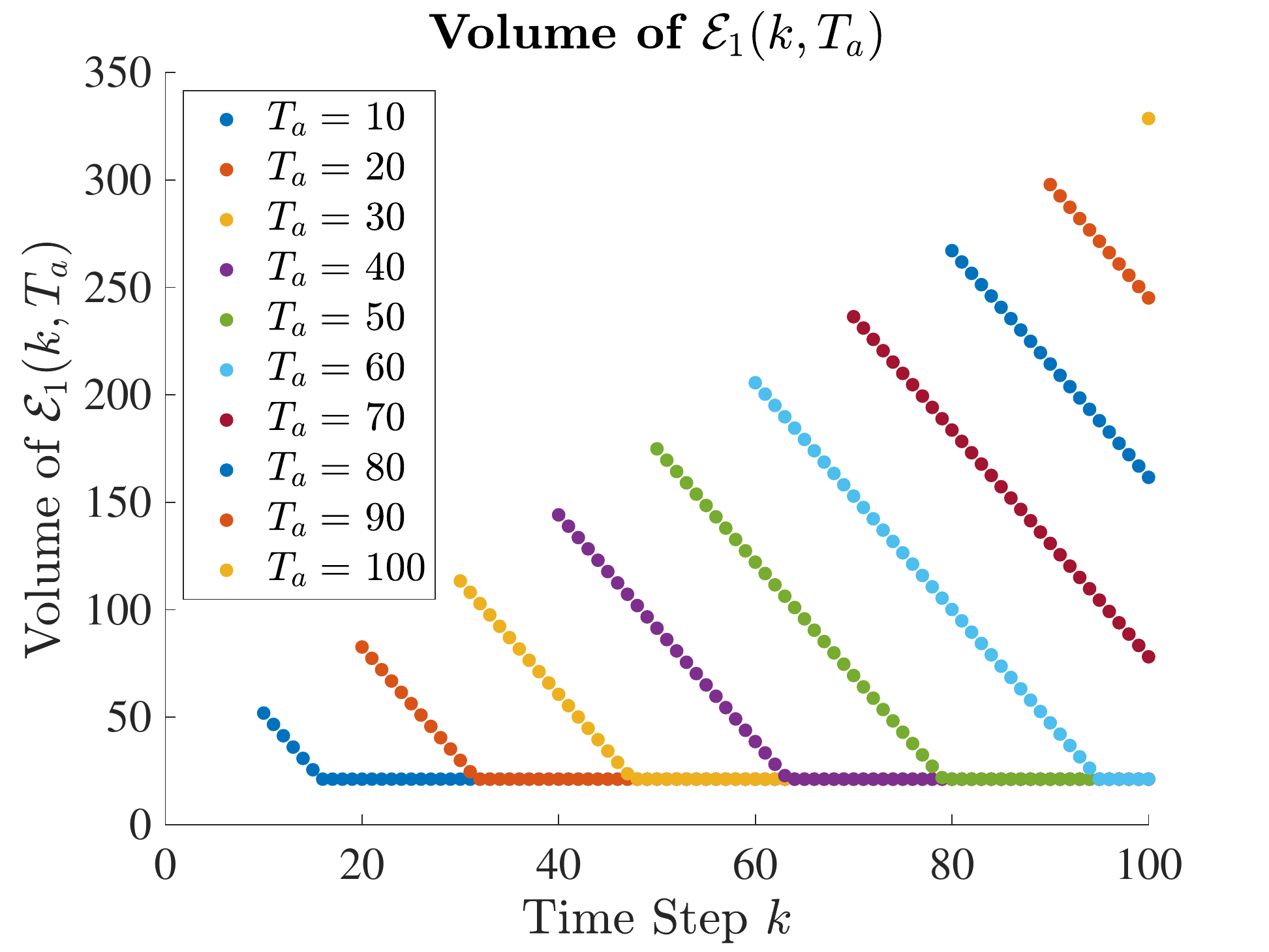}
\caption{Volume of $\mathcal{E}_1(k,T_a)$ as a function of the total number of time steps $k$.}
\label{fig:E1Volume2}
\end{figure}
Figure \ref{fig:E1Volume3} similarly shows how the volume of $\mathcal{E}_1(k,T_a)$ grows linearly with the percentage of time the system is under $p_d$-stealthy attack.
\begin{figure}[h!]
\centering
\includegraphics[width=\columnwidth]{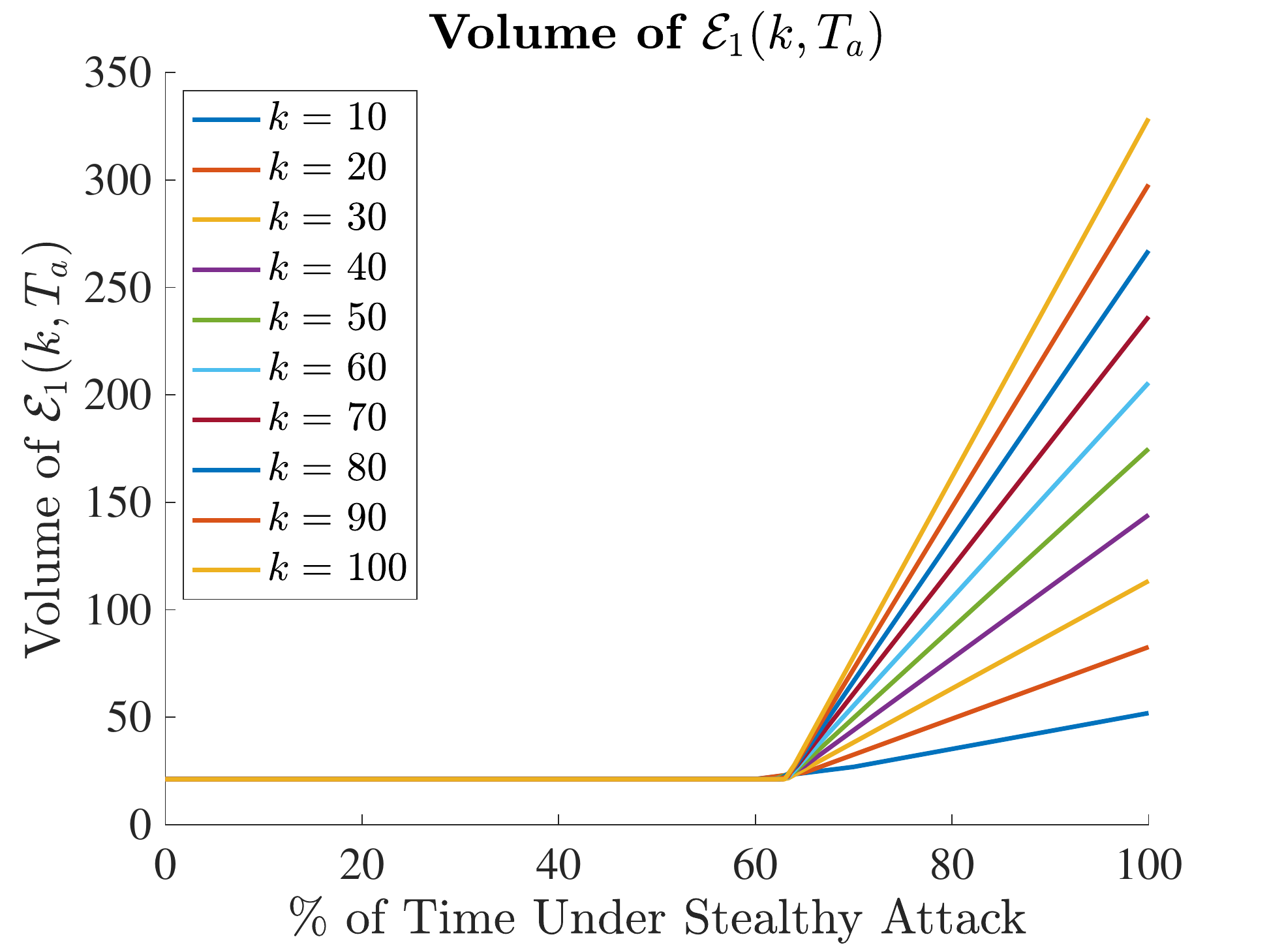}
\caption{Volume of $\mathcal{E}_1(k,T_a)$ as a function of the percentage of time under $p_d$-stealthy attack.}
\label{fig:E1Volume3}
\end{figure}
In Figures \ref{fig:E1Volume1}, \ref{fig:E1Volume2}, and \ref{fig:E1Volume3}, it can be seen that there is a minimal volume for $\mathcal{E}_1(k,T_a)$ which is due to the appearance of the term $\max(1,\gamma^{k-T_a}\gamma_a^{T_a})$ in the definition of $\mathcal{E}_1(k,T_a)$ in \eqref{E1Definition}.

Figure \ref{fig:E2Volume} depicts the volume of $\mathcal{E}_2(p)$, the ellipsoid in which $\bar{x}_k^2$ lies with at least probability $p$.
\begin{figure}[h!]
\centering
\includegraphics[width=\columnwidth]{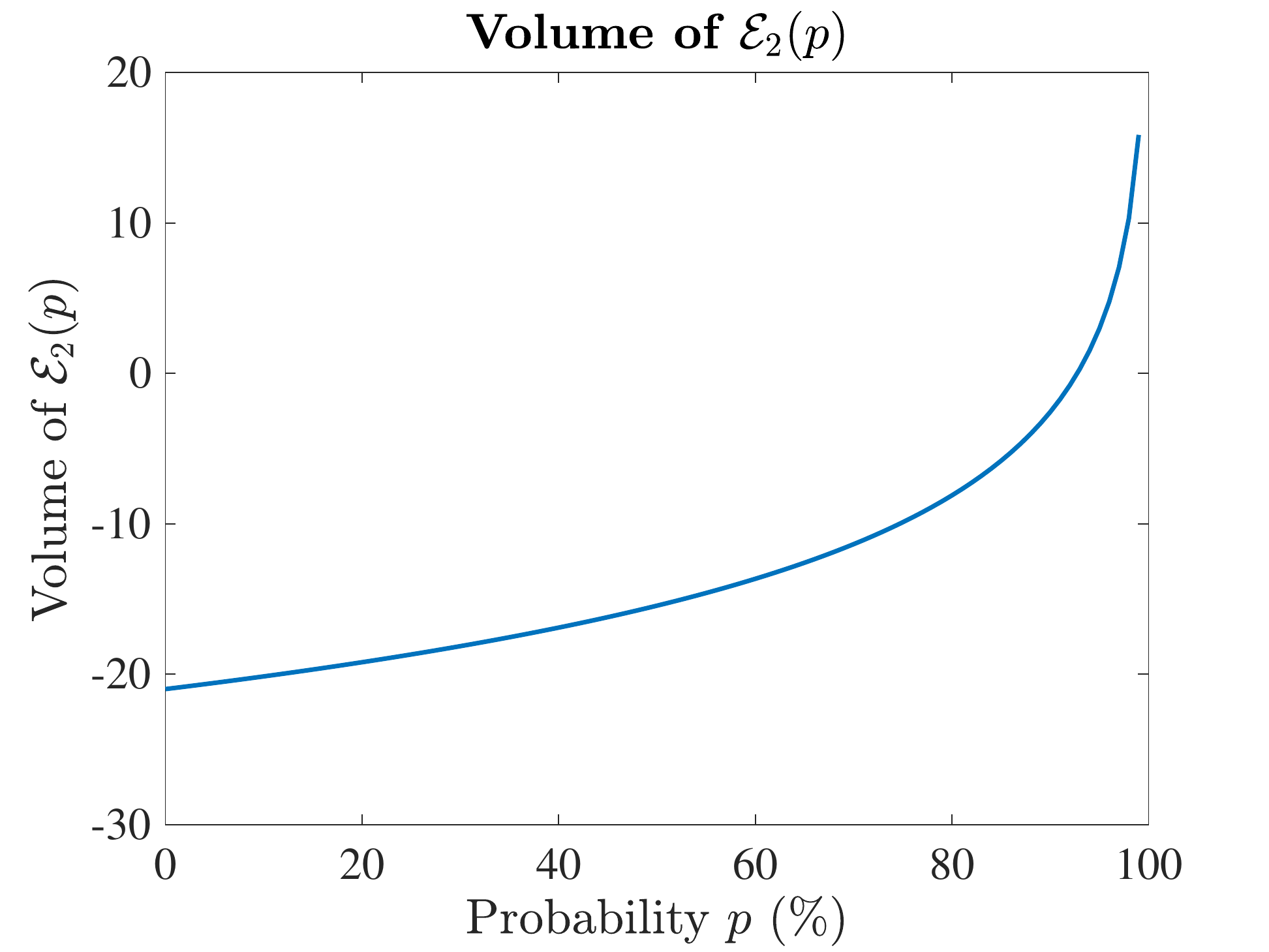}
\caption{Volume of $\mathcal{E}_2(p)$ as a function of the probability level $p$.}
\label{fig:E2Volume}
\end{figure}
As stated in Corollary \ref{VolumeCorollary2}, Figure \ref{fig:E2Volume} shows that the volume of $\mathcal{E}_2(p)$ grows logarithmically with $p$, indicating that larger ellipsoids are associated with greater confidence regions.

Figures \ref{fig:TimingAnalysis1} and \ref{fig:TimingAnalysis2} leverage the safety set presented in Theorem \ref{SafetyTheorem} to indicate whether or not specific response mechanisms will result in maintaining resilience against stealthy adversaries. Figure \ref{fig:TimingAnalysis1} shows the conservative upper bound on the total number of time steps $T_a$ under $p_d$-stealthy attack that the system can remain safe with at least $99\%$ probability.
\begin{figure}[h!]
\centering
\includegraphics[width=\columnwidth]{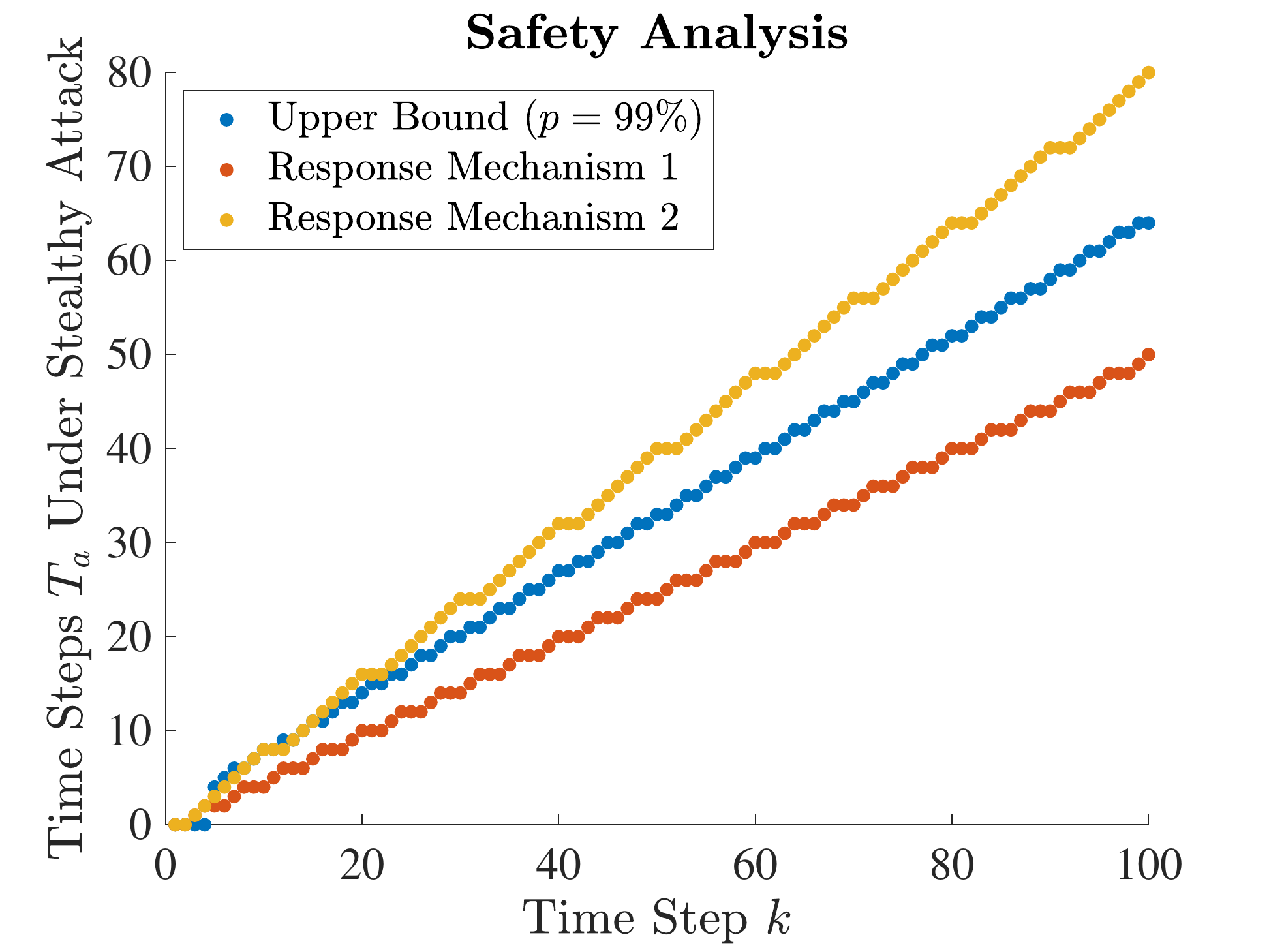}
\caption{Comparison of two different response mechanisms to a conservative upper bound on the total amount of time the system can remain probabilistically safe under $p_d$-stealthy attack in terms of time steps.}
\label{fig:TimingAnalysis1}
\end{figure}
This number of time steps is proportional to the total number of time steps, indicating that the system can handle being under $p_d$-stealthy attack for more time when it has been under normal operation for a larger amount of time. Let $\tau_k^a$ represent the total number of time steps the response mechanism has allowed the system to be vulnerable to attacks at time step $k$. Any response mechanism that ensures $\tau_k^a$ lies below this upper bound will guarantee that the system remains safe against $p_d$-stealthy attacks with at least $99\%$ probability. Figure \ref{fig:TimingAnalysis1} compares $\tau_k^a$ for two response mechanisms to this upper bound. Response mechanism $1$ ensures that every $4$ time steps, the system is not under attack for the first $2$ time steps but then is vulnerable to attacks for the next $2$ time steps. This may be achieved, for example, through software rejuvenation \cite{romagnoli2019design}. Response mechanism $2$ ensures that every $10$ time steps, the system is not under attack for the first $2$ time steps but then is vulnerable to attacks for the next $8$ time steps. As seen in Figure \ref{fig:TimingAnalysis1}, $\tau_k^a$ for response mechanism $1$ lies below the conservative upper bound for safety, guaranteeing that the system will remain safe against $p_d$-stealthy adversaries with at least $99\%$ probability. However, safety cannot be guaranteed when response mechanism 2 is implemented since $\tau_k^a$ lies above the conservative upper bound for safety. The exact same results are shown in Figure \ref{fig:TimingAnalysis2} except that the y-axis depicts the percentage of time under $p_d$-stealthy attack instead of the total number of time steps $T_a$ under $p_d$-stealthy attack.
\begin{figure}[h!]
\centering
\includegraphics[width=\columnwidth]{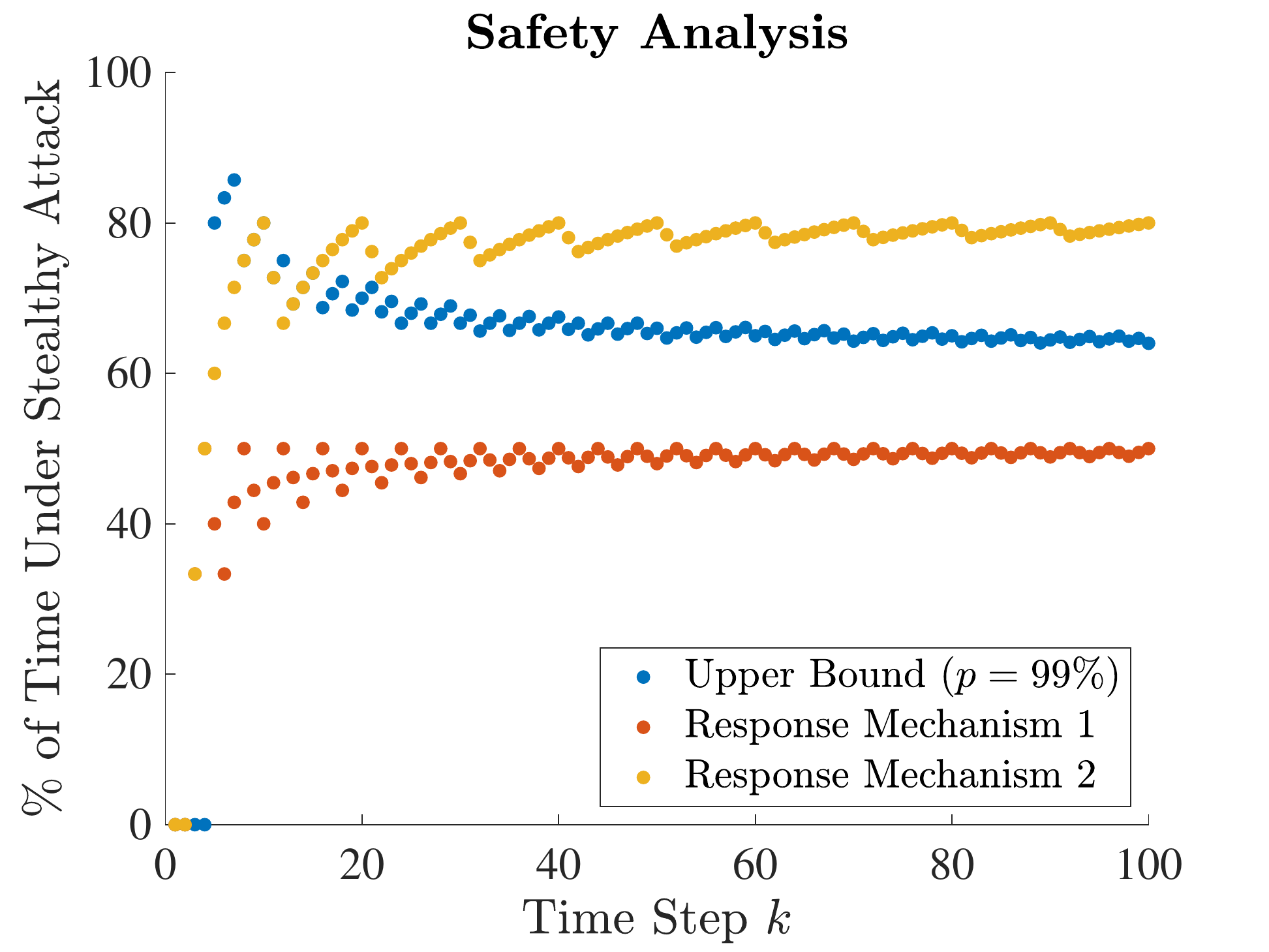}
\caption{Comparison of two different response mechanisms to a conservative upper bound on the total amount of time the system can remain probabilistically safe under $p_d$-stealthy attack in terms of percentages.}
\label{fig:TimingAnalysis2}
\end{figure}

We simulated the quadruple tank process over $1000$ trials for a period of $100$ time steps, allowing the system to be under $p_d$-stealthy attack for the maximum number of time steps according to the conservative upper bound in Figure \ref{fig:TimingAnalysis1}. Over these $1000$ trials, the tanks never overflowed, demonstrating that the conservative upper bound in Figure \ref{fig:TimingAnalysis1} ensures that system remains safe against $p_d$-stealthy adversaries with at least $99\%$ probability.

\section{Conclusion}
In this article, we have shown that when analyzing attacks on CPSs and designing associated response countermeasures, particular attention must be given to how long the system can remain under attack and still remain safe. We show how a detector limits the set of stealthy biases an adversary can exert on the system, and we use this fact to produce a conservative upper bound on the amount of time the system can be under stealthy attack without violating the safety constraints. This timing analysis can then be used in the design of the response mechanism to thwart adversaries. We illustrate our results with the example of the quadruple tanks process. While this work demonstrates that response mechanisms can be developed to prevent stealthy attacks from violating the safety constraints, future work includes ensuring resilience against attacks that are not stealthy, particularly when there may be missed detections as well as a non-negligible time to detection that causes a delay in initiating the response mechanism. Furthermore, future work includes leveraging online information from the sensor measurements to produce a less conservative upper bound on the amount of time that the system can remain resilient against stealthy attacks.

\bibliographystyle{IEEEtran}
\bibliography{root}

\begin{IEEEbiography}[{\includegraphics[width=1in,height=1.25in,clip,keepaspectratio]{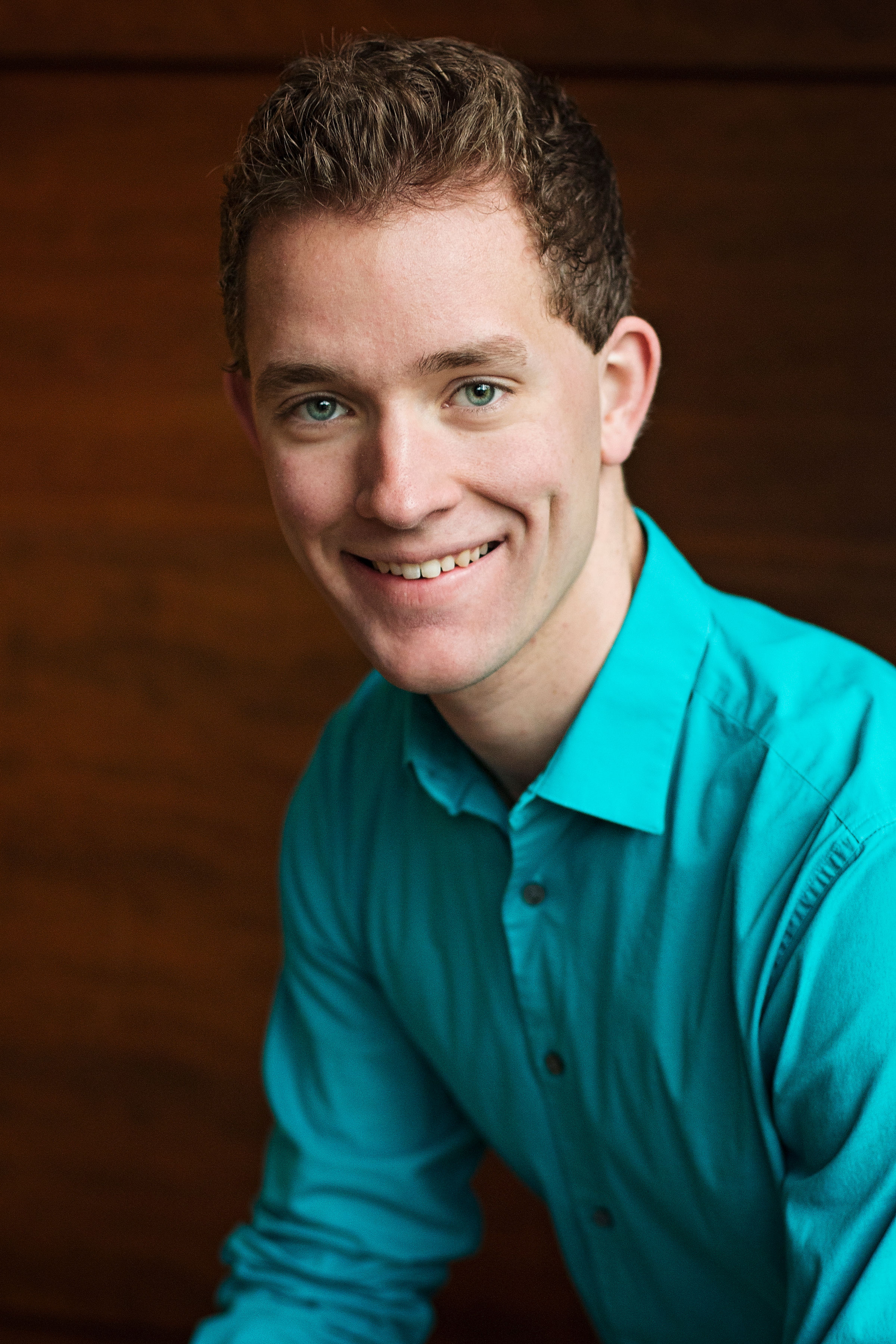}}]{Paul Griffioen}
received the B.S. degree in Engineering, Electrical/Computer concentration, from Calvin College, Grand Rapids, MI, USA in 2016 and the M.S. degree in Electrical and Computer Engineering from Carnegie Mellon University, Pittsburgh, PA, USA in 2018. He is currently pursuing the Ph.D. degree in Electrical and Computer Engineering at Carnegie Mellon University. His research interests include the modeling, analysis, and design of active detection techniques and response mechanisms for ensuring resilient and secure cyber-physical systems.
\end{IEEEbiography}
\begin{IEEEbiography}[{\includegraphics[width=1in,height=1.25in,clip,keepaspectratio]{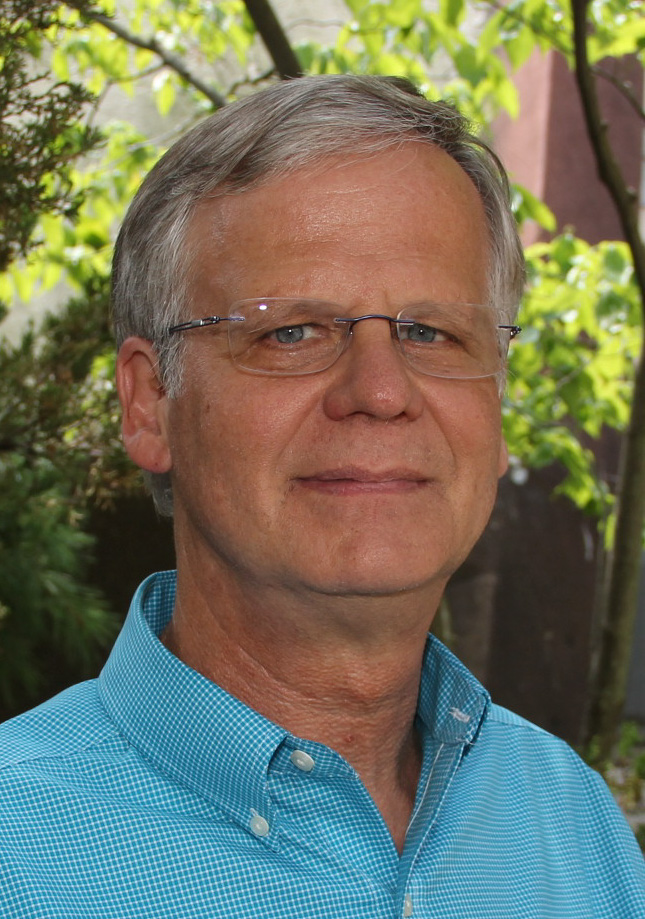}}]{Bruce H. Krogh} is professor emeritus of electrical and computer engineering at Carnegie Mellon University in Pittsburgh, PA, USA, and a member of the technical staff of Carnegie Mellon's Software Research Institute. He was founding director of Carnegie Mellon University-Africa in Kigali, Rwanda. He is chair of the board of the Kigali Collaborative Research Centre (KCRC) in Rwanda and co-lead of the IEEE Continu$\blacktriangleright$ED initiative to develop IEEE's continuing education resources for technical professionals in Africa. Professor Krogh’s research is on the theory and application of control systems, with a current focus on methods for guaranteeing safety and security of cyber-physical systems. He was founding Editor-in-Chief of the \textit{IEEE Transactions on Control Systems Technology}. He is a Life Fellow of the IEEE and a Distinguished Member of the IEEE Control Systems Society.
\end{IEEEbiography}
\begin{IEEEbiography}[{\includegraphics[width=1in,height=1.25in,clip,keepaspectratio]{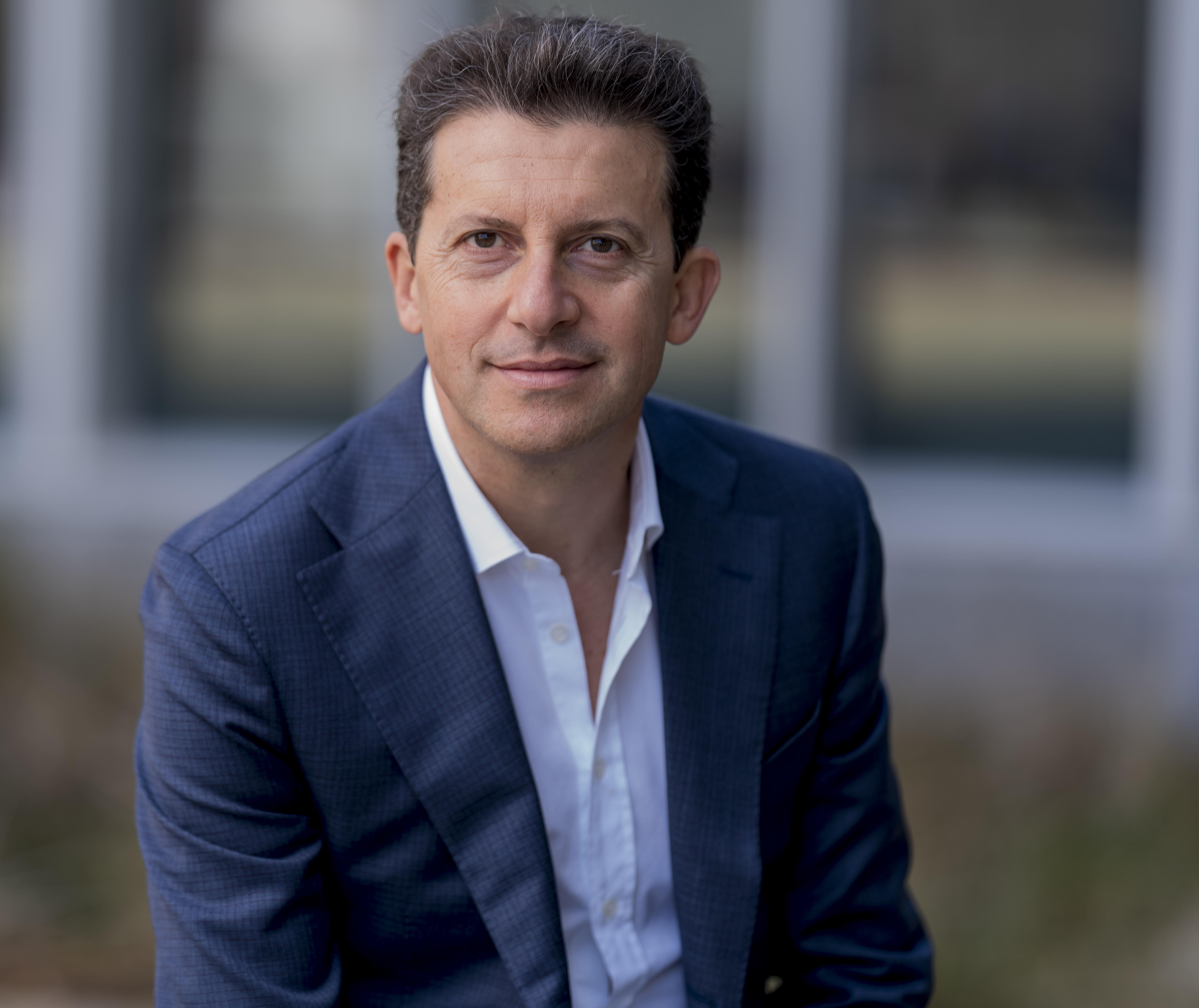}}]{Bruno Sinopoli} is the Das Family Distinguished Professor at Washington University in St. Louis, where he is also the founding director of the center for Trustworthy AI in Cyber-Physical Systems and chair of the Electrical and Systems Engineering Department. He received the Dr. Eng. degree from the University of Padova in 1998 and his M.S. and Ph.D. in Electrical Engineering from the University of California at Berkeley, in 2003 and 2005 respectively. After a postdoctoral position at Stanford University, Dr. Sinopoli was member of the faculty at Carnegie Mellon University from 2007 to 2019, where he was a professor in the Department of Electrical and Computer Engineering with courtesy appointments in Mechanical Engineering and in the Robotics Institute and co-director of the Smart Infrastructure Institute. His research interests include modeling, analysis and design of Resilient Cyber-Physical Systems with applications to Smart Interdependent Infrastructures Systems, such as Energy and Transportation, Internet of Things and control of computing systems.
\end{IEEEbiography}

\end{document}